\documentclass[final, 10pt]{IEEEtran}

\usepackage[dvips]{graphicx}
\usepackage{amssymb,amsmath}
\usepackage{subfig}
\usepackage{colortbl}
\usepackage{epsfig}
\usepackage{algorithm2e}

\graphicspath{{./images/}}

\newtheorem{theorem}{Theorem}[section]
\newtheorem{lemma}{Lemma}[section]

\newtheorem{corollary}{Corollary}[section]
\newtheorem{definition}{Definition}[section]

\newcommand{\ie}{\emph{i.e. }}
\newcommand{\etal}{\textit{et al.}}

\begin{document}

\title{Algebraic Watchdog: Mitigating Misbehavior in Wireless Network Coding}
\author{MinJi Kim\authorrefmark{1}, Muriel M\'{e}dard\authorrefmark{1}, Jo\~{a}o Barros\authorrefmark{2}\vspace*{-.5cm}
\thanks{This work was partially presented at IEEE ISIT 2009 (Seoul, Korea) titled ``An Algebraic Watchdog for Wireless Network Coding'', and at IEEE ITW 2010 (Dublin, Ireland) titled ``A Multi-hop Multi-source Algebraic Watchdog''.}
\thanks{\authorrefmark{1}M. Kim and M. M\'{e}dard (\{minjikim, medard\}@mit.edu) are with the Research Laboratory of Electronics at the Massachusetts Institute of Technology, MA USA. \authorrefmark{2}J. Barros (jbarros@fe.up.pt) is with the Instituto de Telecommunica\c{c}\~{o}es, Faculdade de Engenharia da Universidade do Porto, Portugal.
}
}

\maketitle

\begin{abstract}
We propose a secure scheme for wireless network coding, called the algebraic watchdog. By enabling nodes to detect malicious behaviors probabilistically and use overheard messages to police their downstream neighbors locally, the algebraic watchdog delivers a secure global \emph{self-checking network}.
Unlike traditional Byzantine detection protocols which are \emph{receiver-based}, this protocol gives the senders an active role in checking the node downstream. The key idea is inspired by Marti \etal's \emph{watchdog-pathrater}, which attempts to detect and mitigate the effects of routing misbehavior.

As an initial building block of a such system, we first focus on a two-hop network. We present a graphical model to understand the inference process nodes execute to police their downstream neighbors; as well as to compute, analyze, and approximate the probabilities of misdetection and false detection. In addition, we present an algebraic analysis of the performance using an hypothesis testing framework that provides exact formulae for probabilities of false detection and misdetection.

We then extend the algebraic watchdog to a more general network setting, and propose a protocol in which we can establish \emph{trust} in coded systems in a distributed manner. We develop a graphical model to detect the presence of an adversarial node downstream within a general multi-hop network. The structure of the graphical model (a trellis) lends itself to well-known algorithms, such as the Viterbi algorithm, which can compute the probabilities of misdetection and false detection. We show analytically that as long as the min-cut is not dominated by the Byzantine adversaries, upstream nodes can monitor downstream neighbors and allow reliable communication with certain probability. Finally, we present simulation results that support our analysis. 
\end{abstract}


\section{Introduction}\label{sec:Introduction}
There have been numerous contributions to secure wireless networks, including key management, secure routing, Byzantine detection, and various protocol designs (for a general survey on this topic, see \cite{hubaux}\cite{security01}\cite{security02}\cite{security03}\cite{perlman}\cite{liskov}\cite{Lamport}\cite{papadimitratos}). Countering these types of threats is particularly important in military communications and networking, which are highly dynamic in nature and must not fail when adversaries succeed in compromising some of the nodes in the network. We consider the problem of Byzantine detection. The traditional approach is \emph{receiver-based} -- \ie the receiver of the corrupted data detects the presence of an upstream adversary. However, this detection may come too late as the adversary is partially successful in disrupting the network (even if it is detected). It has wasted network bandwidth, while the source is still unaware of the need for retransmission.

Reference \cite{marti} introduces a protocol for routing wireless networks, called the \emph{watchdog and pathrater}, in which upstream nodes police their downstream neighbors using \emph{promiscuous monitoring}. Promiscuous monitoring means that if a node $v$ is within range of a node $v'$, it can overhear communication to and from $v'$ even if those communication do not directly involve $v$. This scheme successfully detects adversaries and removes misbehaving nodes from the network by dynamically adjusting the routing paths. However, the protocol requires a significant overhead (12\% to 24\%) owing to increased control traffic and numerous cryptographic messages \cite{marti}.

Our goal is to design and analyze a watchdog-inspired protocol for wireless networks using network coding. We propose a new scheme called the \emph{algebraic watchdog}, in which nodes can detect malicious behaviors probabilistically by taking advantage of the broadcast nature of the wireless medium. Although we focus on detecting malicious or misbehaving nodes, the same approach can be applied to faulty or failing nodes. Our ultimate goal is a robust \emph{self-checking network}. The key difference between the our work \cite{algebraicwatchdog} and that of \cite{marti} is that we allow network coding. Network coding \cite{ahlswede}\cite{algebraic} is advantageous as it not only increases throughput and robustness against failures and erasures but also it is resilient in dynamic/unstable networks where state information may change rapidly or may be hard to obtain.

The key challenge in algebraic watchdog is that, by incorporating network coding, we can no longer recognize packets individually. In \cite{marti}, a node $v$ can monitor its downstream neighbor $v'$ by checking that the packet transmitted by $v'$ is a copy of what $v$ transmitted to $v'$. However, with network coding, this is no longer possible as transmitted packets are a function of the received packets. Furthermore, $v$ may not have full information regarding the packets received at $v'$; thus, node $v$ is faced with the challenge of inferring the packets received at $v'$ and ensuring that $v'$ is transmitting a valid function of the received packets. We note that \cite{whenmeets} combines source coding with watchdog; thus, \cite{whenmeets} does not face the same problem as the algebraic watchdog.

The paper is organized as follows. In Section \ref{sec:intuition}, we briefly discuss the intuition behind algebraic watchdog. In Section \ref{sec:watchdog_background}, we present the background and related material. In Section \ref{sec:watchdog_problemstatement}, we introduce our problem statement and network model. In Section \ref{sec:watchdog_example}, we analyze the protocol for a simple two-hop network, first algebraically in Section \ref{sec:watchdog_algebraicapproach} and then graphically in Section \ref{sec:watchdog_graphicalmodel}. In Section \ref{sec:watchdog_multisource}, we extend the analysis for algebraic watchdog to a more general two-hop network, and in Section \ref{sec:watchdog_multihop}, we present an algebraic watchdog protocol for a multi-hop network. We present simulation results in Section \ref{sec:watchdog_simulation}, which confirm our analysis and show that an adversary within the network can be detected probabilistically by upstream nodes. In Section \ref{sec:watchdog_conclusions}, we summarize our contribution and discuss some future work.

\section{Intuition}\label{sec:intuition}
Consider a network in which the sources are well-behaving (If the sources are malicious, there is no ``uncorrupted'' information flow to protect). In such a case, the sources can monitor their downstream neighbors as shown in Figure \ref{fig:watchdog_prop}. Assume that nodes $v_1, v_2, v_3,$ and $v_4$ are sources. Nodes in $S_0 = \{v_1, v_2, v_3\}$ can monitor $v_5$ collectively or independently. In addition, $v_3$ and $v_4$ can monitor $v_6$. This enforces $v_5$ and $v_6$ to send valid information. Note that we do not make any assumption on whether $v_5$ and $v_6$ are malicious or not -- they are forced to send valid information regardless of their true nature.

If it is the case that $v_5$ and $v_6$ are well-behaving, then we can employ the same scheme at $v_5$ or $v_6$ to check $v_7$'s behavior. Thus, propagating \emph{trust} within the network. Now, what if $v_5$ or $v_6$ are malicious? If both $v_5$ and $v_6$ are malicious, all flows to $v_7$ are controlled by malicious nodes -- \ie flows through $v_7$ are completely compromised. Therefore, even if $v_7$ is well-behaving, there is nothing that $v_7$ or $v_1, v_2, v_3, v_4$ can do to protect the flow through $v_7$. The only solution in this case would be to physically remove $v_5$ and $v_6$ from the network or to construct a new path to $v_7$.

The intuition is that as long as the min-cut to any node is not dominated by malicious nodes, then the remaining well-behaving nodes can check its neighborhood and enforce that the information flow is delivered correctly to the destination. For example, assume that only $v_6$ is malicious and $v_5$ is well-behaving in Figure \ref{fig:watchdog_prop}. Since $v_3$ and $v_4$ monitor $v_6$, we know that despite $v_6$ being malicious, $v_6$ is forced to send valid information. Then, $v_7$ receives two valid information flows, which it is now responsible of forwarding. If $v_7$ is well-behaving, we do not have any problem. If $v_7$ is malicious, it may wish to inject errors to the information flow. In this case, $v_7$ is only liable to $v_5$; but it is liable to at least one well-behaving node $v_5$. Thus, it is not completely free to inject any error it chooses; it has to ensure that $v_5$ cannot detect its misbehavior, which may be difficult to accomplish.

In this paper, we show that this is indeed the case. We first start by studying a two-hop network, which would be equivalent to focusing on the operations performed by nodes in $S_0$ to check $v_5$. Then, we discuss how we can propagate this two-hop policing strategy to a multi-hop scenario.

\begin{figure}
\begin{center}
\includegraphics[width=0.33\textwidth]{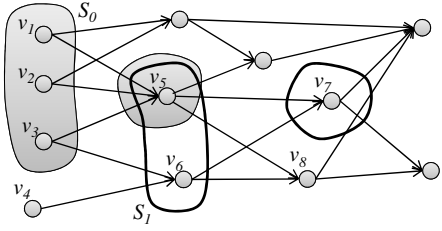}
\end{center}\vspace*{-.2cm}
\caption{An example network.}\label{fig:watchdog_prop}\vspace*{-.4cm}
\end{figure}

\section{Background}\label{sec:watchdog_background}
\subsection{Secure Network Coding}\label{sec:watchdog_securenc}

Network coding, first introduced in \cite{ahlswede}, allows algebraic mixing of information in the intermediate nodes. This mixing has been shown to have numerous performance benefits. It is known that network coding maximizes throughput for multicast \cite{ahlswede} and increases robustness against failures \cite{algebraic} and erasures \cite{reliable}. However, a major concern for network coded systems is their vulnerability to Byzantine adversaries. A single corrupted packet generated by a Byzantine adversary can contaminate all the information to a destination, and propagate to other destinations quickly. For example, in random linear network coding \cite{reliable}, one corrupted packet in a generation (\ie a fixed set of packets) can prevent a receiver from decoding any data from that generation even if all the other packets it has received are valid.

There are several papers that attempt to address this problem. One approach is to correct the errors injected by the Byzantine adversaries using \emph{network error correction} \cite{errorcorrection}. Reference \cite{errorcorrection} bounds the maximum achievable rate in an adversarial setting, and generalizes the Hamming, Gilbert-Varshamov, and Singleton bounds. Jaggi \etal\ \cite{resilient} propose a distributed, rate-optimal, network coding scheme for multicast network that is resilient in the presence of Byzantine adversaries for sufficiently large field and packet size. Reference \cite{subspace} generalizes \cite{resilient} to provide correction guarantees against adversarial errors for any given field and packet size. In \cite{milcom}, Kim \etal\ compare the cost and benefit associated with these Byzantine detection schemes in terms of transmitted bits by allowing nodes to employ the detection schemes to drop polluted data.


\subsection{Secure Routing Protocol: Watchdog and Pathrater}

The problem of securing networks in the presence of Byzantine adversaries has been studied extensively, e.g. \cite{perlman}\cite{liskov}\cite{Lamport}\cite{papadimitratos}. The \emph{watchdog and pathrater} \cite{marti} are two extensions to the Dynamic Source Routing \cite{dsr} protocol that attempt to detect and mitigate the effects of routing misbehavior. The watchdog detects misbehavior based on promiscuous monitoring of the transmissions of the downstream node to confirm if this relay correctly forwards the packets it receives. If a node bound to forward a packet fails to do so after a certain period of time, the watchdog increments a failure rating for that node and a node is deemed to be misbehaving when this failure rating exceeds a certain threshold. The pathrater then uses the gathered information to determine the best possible routes by avoiding misbehaving nodes. This mechanism, which does not punish these nodes (it actually relieves them from forwarding operations), provides an increase in the throughput of networks with misbehaving nodes \cite{marti}.

\subsection{Hypothesis Testing}\label{sec:watchdog_hypothesis}

Hypothesis testing is a method of deciding which of the two hypotheses, denoted $H_0$ and $H_1$, is true, given an observation denoted as $U$. In this paper, $H_0$ is the hypothesis that $v$ is well-behaving, $H_1$ is that $v$ is malicious, and $U$ is the information gathered from overhearing. The observation $U$ is distributed differently depending whether $H_0$ or $H_1$ is true, and these distributions are denoted as $P_{U|H_0}$ and $P_{U|H_1}$ respectively.

An algorithm is used to choose between the hypotheses given the observation $U$. There are two types of error associated with the decision process:
\begin{itemize}
\item{\it{Type 1 error, False detection}}: Accepting $H_1$ when $H_0$ is true (\ie considering a well-behaving $v$ to be malicious), and the probability of this event is denoted $\gamma$.
\item{\it{Type 2 error, Misdetection}}: Accepting $H_0$ when $H_1$ is true (\ie considering a malicious $v$ to be well-behaving), and the probability of this event is denoted $\beta$.
\end{itemize}
The Neyman-Pearson theorem gives the optimal decision rule that given the maximal tolerable $\beta$, we can minimize $\gamma$ by accepting hypothesis $H_0$ if and only if $\log \frac{P_{U|H_0}}{P_{U|H_1}} \geq t$ for some threshold $t$ dependant on $\gamma$. For more thorough survey on hypothesis testing in the context of authentication, see \cite{hypothesis}.

\section{Problem Statement}\label{sec:watchdog_problemstatement}
We shall use elements from a field, and their bit-representation. We use the same character in italic font (\ie $x$) for the field element, and in bold font (\ie $\mathbf{x}$) for the bit-representation. We use underscore bold font (\ie $\mathbf{\underline{x}}$) for vectors. For arithmetic operations in the field, we shall use the conventional notation (\ie $+, -, \cdot
$). For bit-operation, we shall use $\oplus$ for addition, and $\otimes$ for multiplication.

We also require polynomial hash functions defined as follows (for a more detailed discussion on this topic, see \cite{hash}).
\begin{definition}[\textit{\textbf{Polynomial hash functions}}] For a finite field $\mathbf{F}$ and $d \geq 1$, the class of polynomial hash functions on $\mathbf{F}$ is defined as follows:
\[
\mathcal{H}^d(\mathbf{F}) = \{h_a | a = \langle a_0, ..., a_d \rangle \in \mathbf{F}^{d+1}\},
\]
where $h_a(x) = \sum_{i=0}^d a_i x^i$ for $x\in \mathbf{F}$.
\end{definition}

We model a wireless network with a hypergraph $G = (V, E_1, E_2)$, where $V$ is the set of the nodes in the network, $E_1$ is the set of hyperedges representing the connectivity (wireless links), and $E_2$ is the set of hyperedges representing the interference. We use the hypergraph to capture the broadcast nature of the wireless medium. If $(v_1, v_2) \in E_1$ and $(v_1,v_3)\in E_2$ where $v_1, v_2, v_3\in V$, then there is an intended transmission from $v_1$ to $v_2$, and $v_3$ can overhear this transmission (possibly incorrectly). There is a certain transition probability associated with the interference channels known to the nodes, and we model them with binary channels, $BSC(p_{ij})$ for $(v_i, v_j) \in E_2$.

A node $v_i\in V$ transmits coded information $x_i$ by transmitting a packet $\mathbf{\underline{p_i}}$, where $\mathbf{\underline{p_i}}=[\mathbf{a_i}, \mathbf{h_{I_i}}, \mathbf{h_{x_i}}, \mathbf{x_i}]$ is a $\{0,1\}$-vector. A valid packet $\mathbf{\underline{p_i}}$ is defined as below:
\begin{itemize}
\item $\mathbf{a_i}$ corresponds to the coding coefficients $\alpha_j$, $j \in I_i$, where $I_i \subseteq V$ is the set of nodes adjacent to $v_i$ in $E_1$,
\item $\mathbf{h_{I_i}}$ corresponds to the hash $h(x_j)$, $v_j \in I_i$ where $h(\cdot)$ is a $\delta$-bit polynomial hash function,
\item $\mathbf{h_{x_i}}$ corresponds to the polynomial hash $h(x_i)$,
\item $\mathbf{x_i}$ is the $n$-bit representation of $x_i = \sum_{j \in I} \alpha_j x_j \in (\mathbf{F}_{2^n})$.
\end{itemize}
Figure \ref{fig:watchdog_packet} illustrates the structure of a valid packet. For simplicity, we assume the payload $\mathbf{x_i}$ to be a single symbol. We design and analyze our protocol for a single symbol. However, the protocol applies (and therefore, the analysis) to packets with multiple symbols by applying the protocol on each symbol separately.

\begin{figure}[tbp]
\begin{center}\vspace*{.2cm}
\includegraphics[width=0.31\textwidth]{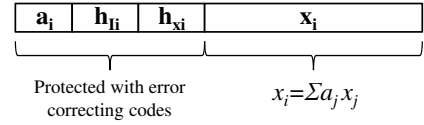}
\end{center}\vspace*{-.2cm}
\caption{A valid packet $\mathbf{\underline{p_i}}$ sent by well-behaving $v_i$.}\label{fig:watchdog_packet}\vspace*{-.3cm}
\end{figure}

The payload $\mathbf{x_i}$ is coded with a $(n, k_i)$-code $\mathcal{C}_i$ with minimum distance $d_i$. Code $\mathcal{C}_i$ is an error-correcting code of rate $R_i = \frac{k_i}{n} = 1-\frac{d_i}{n}$, and is tailored for the forward communication. For instance, $v_{1}$ uses code $\mathcal{C}_1$, chosen appropriately for the channel $(v_1, v_j)\in E_1$, to transmit the payload $\mathbf{x_1}$.

We assume that the payload $\mathbf{x_i}$ is $n$-bits, and the hash $h(\cdot)$ is $\delta$-bits. We assume that the hash function used, $h(\cdot)$, is known to all nodes, including the adversary. In addition, we assume that $\mathbf{a_i}$, $\mathbf{h_{I_i}}$ and $\mathbf{h_{x_i}}$ are part of the header information, and are sufficiently coded to allow the nodes to correctly receive them even under noisy channel conditions.  Protecting the header sufficiently will induce some overhead, but the assumption remains a reasonable one to make. First, the header is smaller than the message itself. Second, even in the routing case, the header and the state information need to be coded sufficiently. Third, the hashes $\mathbf{h_{I_i}}$ and $\mathbf{h_{x_i}}$ are contained within one hop. A node that receives $\mathbf{\underline{p_i}}=[\mathbf{a_i}, \mathbf{h_{I_i}}, \mathbf{h_{x_i}}, \mathbf{x_i}]$ does not need to repeat $\mathbf{h_{I_i}}$, only $\mathbf{h_{x_i}}$. Therefore, the overhead associated with the hashes is proportional to the in-degree of a node, and does not accumulate with the routing path length.

Assume that $v_i$ transmits $\mathbf{\underline{p_i}}=[\mathbf{a_i}, \mathbf{h_{I_i}}, \mathbf{h_{x_i}}, \mathbf{\hat{x}_i}]$, where $\mathbf{\hat{x}_i} = \mathbf{x_i} \oplus \mathbf{e}$,  $\mathbf{e} \in \{0,1\}^n$. If $v_i$ is misbehaving, then $\mathbf{e} \ne 0$. Our goal is to detect with high probability when $\mathbf{e} \ne 0$. Even if $|\mathbf{e}|$ is small (\ie the hamming distance between $\mathbf{\hat{x}_i}$ and $\mathbf{x_i}$ is small), the algebraic interpretation of $\mathbf{\hat{x}_i}$ and $\mathbf{x_i}$ may differ significantly. For example, consider $n=4$, $\mathbf{\hat{x}_i} =[0000]$, and $\mathbf{x_i} = [1000]$. Then, $\mathbf{e} = [1000]$ and $|\mathbf{e}| = 1$. However, the algebraic interpretations of $\mathbf{\hat{x}_i}$ and $\mathbf{x_i}$ are 0 and 8, respectively. Thus, even a single bit flip can alter the message significantly.

\subsection{Threat Model}\label{sec:watchdog_threat}

We assume powerful adversaries, who can eavesdrop their neighbor's transmissions, has the power to inject or corrupt packets, and are computationally unbounded. Thus, the adversary will find $\mathbf{\hat{x}_i}$ that will allow its misbehavior to be undetected, if there is any such $\mathbf{\hat{x}_i}$. However, the adversary does not know the specific realization of the random errors introduced by the channels. We denote the rate at which an adversary injects error (\ie performs bit flips to the payload) to be $p_{adv}$. The adversaries' objective is to corrupt the information flow without being detected by other nodes.

Our goal is to detect probabilistically a malicious behavior that is beyond the channel noise, represented by $BSC(p_{ik})$. Note that the algebraic watchdog does not completely eliminate errors introduced by the adversaries; its objective is to limit the errors introduced by the adversaries to be at most that of the channel. Channel errors (or those introduced by adversaries below the channel noise level) can be corrected using appropriate error correction schemes, which will be necessary even without Byzantine adversaries in the network.

The notion that adversarial errors should sometimes be treated as channel noise has been introduced previously in \cite{milcom}. Under heavy attack,
 attacks should be treated with special attention; while under light attack, the attacks can be treated as noise and corrected using error-correction schemes. The results in this paper partially reiterate this idea.

\section{Two-hop network: An Example}\label{sec:watchdog_example}
Consider a network (or a small neighborhood of nodes in a larger network) with nodes $v_1, v_2, ... v_m,$ $v_{m+1}$, $v_{m+2}$. Nodes $v_i$, $i\in [1,m]$, want to transmit $x_i$ to $v_{m+2}$ via $v_{m+1}$. A single node $v_i$, $i\in [1,m]$, cannot check whether $v_{m+1}$ is misbehaving or not even if $v_i$ overhears $\mathbf{x_{m+1}}$, since without any information about $x_j$ for $j \in [1,m]$, ${x}_{m+1}$ is completely random to $v_i$. On the other hand, if $v_i$ knows $x_{m+1}$ and $x_j$ for all $j \in [1, m]$, then $v_i$ can verify that $v_{m+1}$ is behaving with certainty; however, this requires at least $m-1$ additional reliable transmissions to $v_i$.

\begin{figure}[tbp]
\begin{center}
\includegraphics[width=0.33\textwidth]{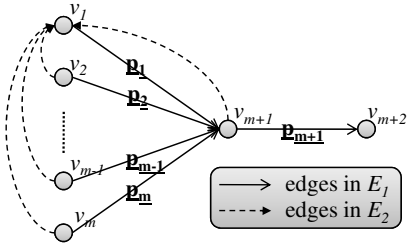}
\end{center}
\caption{A small neighborhood of a wireless network with $v_1$.}\label{fig:watchdog_general}\vspace*{-.3cm}
\end{figure}

We take advantage of the wireless setting, in which nodes can overhear their neighbors' transmissions. In Figure \ref{fig:watchdog_general},  we use the solid lines to represent the intended channels $E_1$, and dotted lines for the interference channels $E_2$ which we model with binary channels as mentioned in Section \ref{sec:watchdog_problemstatement}. Each node checks whether its neighbors are transmitting values that are consistent with the gathered information. If a node detects that its neighbor is misbehaving, then it can alert other nodes in the network and isolate the misbehaving node.

%


\begin{figure}[tbp]
\begin{center}
\includegraphics[width=0.3\textwidth]{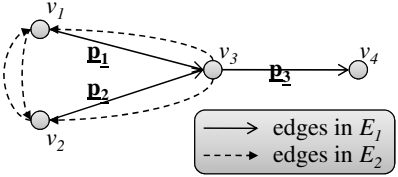}
\end{center}\vspace*{-.2cm}
\caption{A wireless network with $m=2$.}\label{fig:watchdog_2sources}\vspace*{-.3cm}
\end{figure}

In the next subsections, we shall use an example with $m=2$, as shown Figure \ref{fig:watchdog_2sources}. We introduce the graphical model which explains how a node $v_i$ checks its neighbor's behavior. Then, we use an algebraic approach to analyze and compute $\gamma$ and $\beta$ for this example network with $m=2$. In this section, we assume for simplicity that nodes do not code the payload -- \ie an error-correcting code of rate $R_i = 1$ is used.

Note that a malicious $v_3$ would not inject errors in $\mathbf{h_{x_3}}$ only, because the destination $v_4$ can easily verify if $\mathbf{h_{x_3}}$ is equal to $h(\mathbf{x_3})$. Therefore, $\mathbf{h_{x_3}}$ and $\mathbf{x_3}$ are consistent. In addition, $v_3$ would not inject errors in $\mathbf{h_{x_j}}$, $j \in I_3$, as each node $v_j$ can verify the hash of its message. On the other hand, a malicious $v_3$ can inject errors in $\mathbf{a_3}$, forcing $v_4$ to receive incorrect coefficients $\tilde{\alpha}_j$'s instead of $\alpha_j$'s. However, any error introduced in $\mathbf{a_3}$ can be translated to errors in $\mathbf{x_3}$ by assuming that $\tilde{\alpha}_j$'s are the correct coding coefficients. Therefore, we are concerned only with the case in which $v_3$ introduces errors in $\mathbf{x_3}$ (and therefore, in $\mathbf{h_{x_3}}$ such that $\mathbf{h_{x_3}} = h(\mathbf{x_3})$).

%

\subsection{Graphical model approach}\label{sec:watchdog_graphicalmodel}

We present a graphical approach to model the problem for $m=2$ systematically, and to explain how a node may check its neighbors. This approach may be advantageous as it lends easily to already existing graphical model algorithms as well as some approximation algorithms.

We shall consider the problem from $v_1$'s perspective. As shown in Figure \ref{fig:watchdog_trellis}, the graphical model has four layers: Layer 1 contains $2^{n+h}$ vertices, each representing a bit-representation of $[\mathbf{\tilde{x}_2, h(x_2)}]$; Layer 2 contains $2^n$ vertices, each representing a bit-representation of $\mathbf{x_2}$; Layer 3 contains $2^n$ vertices corresponding to $\mathbf{x_3}$; and Layer 4 contains $2^{n+h}$ vertices corresponding to  $[\mathbf{\tilde{x}_3, h(x_3)}]$. Edges exist between adjacent layers as follows:
\begin{itemize}
\item{\it{Layer 1 to Layer 2:}} An edge exists between a vertex $[\mathbf{v,u}]$ in Layer 1 and a vertex $\mathbf{w}$ in Layer 2 if and only if $\mathbf{h(w) = u}$. The edge weight is normalized such that the total weight of edges leaving $[\mathbf{v, u}]$ is 1, and the weight is proportional to:
     \[\mathbf{P}(\mathbf{v}| \text{ Channel statistics and } \mathbf{w} \text{ is the original message}),\] \vspace*{-.08cm}which is the probability that the inference channel outputs message $\mathbf{v}$ given an input message $\mathbf{w}$.
\item{\it{Layer 2 to Layer 3:}} The edges represent a permutation. A vertex $\mathbf{v}$ in Layer 2 is adjacent to a vertex $\mathbf{w}$ in Layer 3 if and only if $w = c+\alpha_2 v$, where $c = \alpha_1 x_1$ is a constant, $\mathbf{v}$ and $\mathbf{w}$ are the bit-representation of $v$ and $w$, respectively. The edge weights are all 1.
\item{\it{Layer 3 to Layer 4:}}  An edge exists between a vertex $\mathbf{v}$ in Layer 3 and a vertex $[\mathbf{w,u}]$ in Layer 4 if and only if $\mathbf{h(v) = u}$. The edge weight is normalized such that the total weight leaving $\mathbf{v}$ is 1, and is proportional to:
     \[\mathbf{P}(\mathbf{w}| \text{ Channel statistics and } \mathbf{v} \text{ is the original message}).\]
\end{itemize}

\begin{figure}[tbp]
\begin{center}
\includegraphics[width=0.45\textwidth]{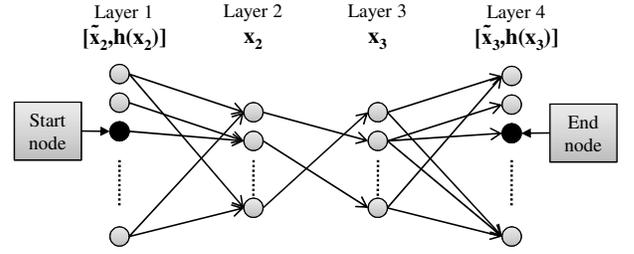}
\end{center}\vspace*{-.3cm}
\caption{A graphical model from $v_1$'s perspective}\label{fig:watchdog_trellis}\vspace*{-.3cm}
\end{figure}

Node $v_1$ overhears the transmissions from $v_2$ to $v_3$ and from $v_3$ to $v_4$; therefore, it receives $[\mathbf{\tilde{x}_2, h(x_2)}]$ and $[\mathbf{\tilde{x}_3, h(x_3)}]$, corresponding to the \emph{starting point} in Layer 1 and the \emph{destination point} in Layer 4 respectively. By computing the sum of the product of the weights of all possible paths between the starting and the destination points, $v_1$ computes the probability that $v_3$ is consistent with the information gathered.

This graphical model illustrates sequentially and visually the inference process $v_1$ executes. Furthermore, by using approximation algorithms and pruning algorithms, we may be able to simplify the computation as well as the structure of the graph. In addition, the graphical approach may be extend to larger networks, as we shall discuss in Section \ref{sec:watchdog_multisource}.

%
%
%

\subsection{Algebraic approach}\label{sec:watchdog_algebraicapproach}

We explain the inference process described above using the graphical model introduced in Section \ref{sec:watchdog_graphicalmodel}. Consider $v_1$. By assumption, $v_1$ correctly receives $\mathbf{a_2}$,  $\mathbf{a_3}$, $\mathbf{h_{I_2}}$, $\mathbf{h_{I_3}}$, $\mathbf{h_{x_2}}$, and $\mathbf{h_{x_3}}$. In addition, $v_1$ receives $\mathbf{\tilde{x}_2 = x_2 + e'}$ and $\mathbf{\tilde{x}_3 = x_3 + e''}$, where $\mathbf{e'}$ and $\mathbf{e''}$ are outcomes of the interference channels. Given $\mathbf{\tilde{x}_j}$ for $j = \{2,3\}$ and the transition probabilities, $v_1$ computes $r_{j\rightarrow 1}$ such that the sum of the probability that the interference channel from $v_j$ and $v_1$ outputs $\mathbf{\tilde{x}_j}$ given $\mathbf{x} \in B(\mathbf{\tilde{x}_j}, r_{j\rightarrow 1})$ is greater or equal to $1-\epsilon$ where $\epsilon$ is a constant, and $B(\mathbf{x}, r)$ is a $n$-dimensional ball of radius $r$ centered at $\mathbf{x}$. Now, $v_1$ computes $\tilde{X}_j = \{\mathbf{x}\ |\ h(x) = h(x_j)\} \cap B(\mathbf{\tilde{x}_j}, r_{j\rightarrow 1})$ for $j = \{2, 3\}$. Then, $v_1$ computes $\alpha_1 x_1 + \alpha_2 \hat{x}$ for all $\mathbf{\hat{x}} \in \tilde{X}_2$. Then, $v_1$ intersects $\tilde{X}_3$ and the computed $\alpha_1 x_1 + \alpha_2 \hat{x}$'s. If the intersection is empty, then $v_1$ claims that $R$ is misbehaving.

The set $\{\mathbf{x}\ |\ h(x) = h(x_2)\}$ represents the Layer 2 vertices reachable from the starting point ($[\mathbf{\tilde{x}_2, h(x_2)}]$ in Layer 1), and $\tilde{X}_2$ is a subset of the reachable Layer 2 vertices such that the total edge weight (which corresponds to the transition probability) from the starting point is greater than $1-\epsilon$. Then, computing $\alpha_1 x_1 + \alpha_2 \hat{x}$ represents the permutation from Layers 2 to 3. Finally, the intersection with $\tilde{X}_3$ represents finding a set of Layer 3 vertices such that they are adjacent to the destination point ($[\mathbf{\tilde{x}_3, h(x_3)}]$ in Layer 4) and their total transition probability to the destination point is greater than $1-\epsilon$.




\begin{lemma}
For $n$ sufficiently large, the probability of false detection, $\gamma \leq \epsilon$ for any arbitrary small constant $\epsilon$.
\end{lemma}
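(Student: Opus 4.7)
My plan is to decompose the false-detection event into the per-channel events $\{x_j \notin B(\mathbf{\tilde{x}_j}, r_{j\to 1})\}$ for $j \in \{2,3\}$, and then to conclude via a union bound together with the defining property of the radii $r_{j\to 1}$.

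First I would observe that under $H_0$ the payload satisfies $x_3 = \alpha_1 x_1 + \alpha_2 x_2$ \emph{exactly}, and the hashes $\mathbf{h_{x_2}}, \mathbf{h_{x_3}}$ are received uncorrupted by the header-protection assumption of Section~\ref{sec:watchdog_problemstatement}. Consequently, $x_j \in \{\mathbf{x} : h(x) = h(x_j)\}$ automatically for $j \in \{2,3\}$, so $x_j \in \tilde{X}_j$ reduces to the geometric condition $x_j \in B(\mathbf{\tilde{x}_j}, r_{j\to 1})$. Whenever both inclusions hold, the honest $x_3 = \alpha_1 x_1 + \alpha_2 x_2$ itself witnesses $\tilde{X}_3 \cap \{\alpha_1 x_1 + \alpha_2 \hat{x} : \hat{x} \in \tilde{X}_2\} \neq \emptyset$, so $v_1$ does not raise a false alarm. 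Contrapositively, a false detection forces one of the two confidence-ball misses, giving
\[
\gamma \;\leq\; \Pr\!\bigl(x_2 \notin B(\mathbf{\tilde{x}_2}, r_{2\to 1})\bigr) \;+\; \Pr\!\bigl(x_3 \notin B(\mathbf{\tilde{x}_3}, r_{3\to 1})\bigr)
\]
by a union bound.

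Next I would invoke the construction of $r_{j\to 1}$. By the symmetry of BSC$(p_{j1})$, the defining quantity $\sum_{\mathbf{x}\in B(\mathbf{\tilde{x}_j}, r_{j\to 1})} P(\mathbf{\tilde{x}_j} \mid \mathbf{x})$ equals the Binomial$(n, p_{j1})$ CDF at $r_{j\to 1}$, and this is precisely $\Pr(x_j \in B(\mathbf{\tilde{x}_j}, r_{j\to 1}))$. Running the construction with confidence parameter $\epsilon/2$ in place of $\epsilon$ therefore gives each summand above at most $\epsilon/2$, and the union bound yields $\gamma \leq \epsilon$. The hypothesis ``$n$ sufficiently large'' enters here only to guarantee that a small-enough confidence level can be realised by a \emph{non-trivial} radius: a Chernoff/Hoeffding tail bound on Binomial$(n, p_{j1})$ with $p_{j1} < 1/2$ supplies an $r_{j\to 1}$ of order $n p_{j1} + O(\sqrt{n \log(1/\epsilon)})$, which is well-defined and strictly less than $n$ once $n \geq n_0(\epsilon)$.

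The main conceptual obstacle is the first step: one must recognise that although hash collisions and the Hamming-ball geometry can conceivably \emph{enlarge} $\tilde{X}_j$, they can never spuriously \emph{remove} the true $x_j$ from it, because for an honest $v_3$ the algebraic relation $x_3 = \alpha_1 x_1 + \alpha_2 x_2$ is automatic. These ``enlarging'' effects matter only for the misdetection probability $\beta$ analysed in the companion result and do not inflate $\gamma$; I would emphasise this separation before deploying the union bound, so that the argument cleanly isolates $\gamma$ from $\beta$.
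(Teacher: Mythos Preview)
Your proposal is correct and follows essentially the same approach as the paper's own proof: under $H_0$ the true $x_2$ and $x_3$ land in $\tilde{X}_2$ and $\tilde{X}_3$ with high probability (since the hash constraint is automatically satisfied and only the Hamming-ball inclusion is at issue), which makes the intersection non-empty and prevents a false alarm. Your version is in fact more explicit than the paper's---you spell out the union bound, use $\epsilon/2$ per channel to hit the target $\epsilon$ exactly, and articulate why the hash constraint cannot exclude the honest codeword---whereas the paper simply asserts that each inclusion holds with probability at least $1-\epsilon$ and concludes directly.
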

\begin{proof}
Assume that $v_3$ is not malicious, and transmits $\mathbf{x_3}$ and $\mathbf{h_{x_3}}$ consistent with $v_4$'s check. Then, for $n$ sufficiently large, $v_1$ can choose $r_{2\rightarrow 1}$ and $r_{3\rightarrow 1}$ such that the probability that the bit representation of $x_3 = \alpha_1 x_1 + \alpha_2 x_2$ is in $\tilde{X}_3$ and the probability that $\mathbf{x_2} \in \tilde{X}_2$ are greater than $1-\epsilon$. Therefore, $\tilde{X}_3 \cap \{\alpha_1 x_1 + \alpha_2 \hat{x}\ |\ \forall \mathbf{\hat{x}} \in \tilde{X}_2\} \ne \emptyset$ with probability arbitrary close to 1. Therefore, a well-behaving $v_3$ passes $v_1$'s check with probability at least $1-\epsilon$. Thus, $\gamma \leq \epsilon$.
\end{proof}

\begin{lemma}\label{thm:s1} The probability that a malicious $v_3$ is undetected from $v_1$'s perspective is given by
\[
\min\biggl\{1, \frac{\sum_{k=0}^{r_{1\rightarrow 2}} \binom{n}{k}}{2^{(h+n)}}\cdot \frac{\sum_{k=0}^{r_{2\rightarrow 1}} \binom{n}{k}}{2^{(h+n)}}\cdot \frac{\sum_{k=0}^{r_{3\rightarrow 1}} \binom{n}{k}}{2^{h}}\biggl\}.
\]
\end{lemma}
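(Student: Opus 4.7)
The plan is to recast the miss-detection event algebraically and then bound it by a union-bound / Markov-style counting argument. Unwinding the definitions of $\tilde{X}_2$ and $\tilde{X}_3$ from Section~\ref{sec:watchdog_algebraicapproach}, a malicious $v_3$ escapes $v_1$'s check exactly when there exists some $\hat{x} \in \{0,1\}^n$ simultaneously satisfying (i) $\hat{x} \in B(\tilde{x}_2, r_{2\to 1})$, (ii) $h(\hat{x}) = h(x_2)$, (iii) $\alpha_1 x_1 + \alpha_2 \hat{x} \in B(\tilde{x}_3, r_{3\to 1})$, and (iv) $h(\alpha_1 x_1 + \alpha_2 \hat{x}) = h(\hat{x}_3)$. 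So the quantity to bound is $\Pr[\exists \hat{x} : \text{(i)--(iv)}]$, where the randomness is over the BSC noise on the interference channels and over the polynomial hash.

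I would then control this probability by expected-count arguments. Treating the polynomial hash as essentially uniform (collision probability at most $d/|\mathbf{F}| \lesssim 2^{-h}$) and the distinct BSC interference channels as independent, each pairing of a Hamming-ball constraint with a hash equality contributes a factor of the form $|B(r)|/2^{h+n}$: the ball volume $\sum_{k=0}^{r}\binom{n}{k}$ covers the geometric part (normalized by the message space $2^n$), while the $2^{-h}$ accounts for the hash match. The two symmetric factors $|B(r_{1\to 2})|/2^{h+n}$ and $|B(r_{2\to 1})|/2^{h+n}$ arise from packaging the constraint (i)+(ii) on the $v_2$-side witness with the corresponding constraint on the adversary's injected error $\mathbf{e} = \hat{x}_3 \oplus x_3$, which is bounded by $r_{1\to 2}$ in view of the Section~\ref{sec:watchdog_threat} assumption that adversarial perturbations live at channel-noise level. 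The remaining factor $|B(r_{3\to 1})|/2^h$ comes from the residual ball-and-hash constraint on $v_1$'s overheard $\tilde{x}_3$ once the adversary's hash $h(\hat{x}_3)$ is committed.

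The hardest part of the proof will be bookkeeping the dependencies. The candidate witnesses $\hat{x}$ are coupled through the common affine map $\hat{x} \mapsto \alpha_1 x_1 + \alpha_2 \hat{x}$ and share the same observed $\tilde{x}_2, \tilde{x}_3$, so the union bound over $\hat{x}$ (and over the adversary's admissible $\hat{x}_3$) must be arranged so that exactly these three factors appear without extra cross terms. I would also need to verify that the polynomial hash genuinely achieves near-uniform collision behavior at the scale used here, and justify the independence of distinct BSC interference channels seen by $v_1$. The $\min\{1, \cdot\}$ clamp in the statement simply kills off the regime in which the counting bound exceeds one and becomes vacuous.
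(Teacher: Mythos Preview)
Your plan has a genuine gap in where the factor $\sum_{k=0}^{r_{1\to 2}}\binom{n}{k}/2^{h+n}$ comes from. You attribute it to a bound $|\mathbf{e}|\le r_{1\to 2}$ on the adversary's injected error, citing the threat model in Section~\ref{sec:watchdog_threat}. But that section assumes no such bound: the adversary may flip bits at any rate $p_{adv}$, and the whole point of the lemma is to upper-bound the misdetection probability \emph{without} constraining $\mathbf{e}$. So as written, your derivation of that factor does not go through.

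In the paper's argument, the $r_{1\to 2}$ factor has a completely different origin: it comes from the \emph{reciprocal} check performed by $v_2$. For any candidate $z\in\tilde{X}_3$, one writes $z=\alpha_1 x_1+\alpha_2(x_2+e_2)=\alpha_1(x_1+e_1)+\alpha_2 x_2$ with $e_1=\alpha_1^{-1}\alpha_2 e_2$. The crucial point---which your proposal does not use---is that for uniformly random coding coefficients $\alpha_1,\alpha_2\in\mathbf{F}_{2^n}$, the induced $e_1$ is uniformly random in $\mathbf{F}_{2^n}$, so the probability that $x_1+e_1$ lands in $\tilde{X}_1=\{x:h(x)=h(x_1)\}\cap B(\tilde{x}_1,r_{1\to 2})$ is exactly $Vol(\tilde{X}_1)/2^n$. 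Thus the randomness that drives the bound is the field randomization due to the linear code, not the BSC noise or the hash (which only determine the \emph{sizes} of the sets $\tilde{X}_j$). Your list of sources of randomness omits the coding coefficients entirely, and your conditions (i)--(iv) never bring $v_2$'s set $\tilde{X}_1$ into play, so the mechanism that actually produces the three-factor product is missing from the outline. The remaining two factors in your sketch (the $r_{2\to 1}$ and $r_{3\to 1}$ terms) are interpreted roughly correctly: one is $\mathbf{P}(x_2+e_2\in\tilde{X}_2)$ and the other is $Vol(\tilde{X}_3)$, with the $\min\{1,\cdot\}$ arising because the product is an expected count rather than a probability.
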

\begin{proof}
Assume that $v_3$ is malicious and injects errors into $\mathbf{x_3}$. Consider an element $\mathbf{z} \in \tilde{X}_3$, where $z = \alpha_1 x_1 + \alpha_2 x_2 + e = \alpha_1 x_1 + \alpha_2 (x_2 + e_2)$ for some $e$ and $e_2$. Note that, since we are using a field of size $2^n$, multiplying an element from the field by a randomly chosen constant has the effect of randomizing the product. Here, we consider two cases:

{\it Case 1:} If $x_2 + e_2 \notin \tilde{X}_2$, then $v_3$ fails $v_1$'s check.

{\it Case 2:} If $x_2 + e_2 \in \tilde{X}_2$, then $v_3$ passes $v_1$'s check; however, $v_3$ is unlikely to pass $v_2$'s check. This is because $\alpha_1 x_1 + \alpha_2 (x_2 + e_2) = \alpha_1 x_1 + \alpha_2 x_2 + \alpha_2 e_2 = \alpha_1 (x_1 + e_1) + \alpha_2 x_2$ for some $e_1$. Here, for uniformly random $\alpha_1$ and $\alpha_2$, $e_1$ is also uniformly random. Therefore, the probability that $v_3$ will pass is the probability that the uniformly random vector $x_1 + e_1$ belongs to $\tilde{X}_1 = \{x\ |\ h(x) = h(x_1)\} \cap B(\mathbf{\tilde{x}_1}, r_{1\rightarrow 2})$ where $v_2$ overhears $\mathbf{\tilde{x}_1}$ from $v_1$, and the probability that the interference channel from $v_1$ to $v_2$ outputs $\mathbf{\tilde{x}_1}$ given $\mathbf{x} \in B(\mathbf{\tilde{x}_1}, r_{1\rightarrow 2})$ is greater than $1-\epsilon$.
    \begin{align*}
    \mathbf{P}(\text{A malicious } v_3 \text{ passes $v_2$'s check})\\
    = \mathbf{P}(x_1 + e_1 \in \tilde{X}_1)
    = \frac{Vol(\tilde{X}_1)}{2^n},
    \end{align*}
    where $Vol(\cdot)$ is equal to the number of $\{0,1\}$-vectors in the given set. Since $Vol(B(x,r)) = \sum_{k=0}^r \binom{n}{k} \leq 2^n$, and the probability that $h(x)$ is equal to a given value is $\frac{1}{2^h}$, $Vol(\tilde{X}_1)$ is given as follows:
    \[
    Vol(\tilde{X}_1) = \frac{ Vol(B(\tilde{x}_1, r_{1\rightarrow 2}))}{2^h} = \frac{\sum_{k=0}^{r_1\rightarrow 2} \binom{n}{k}}{2^h}.
    \]%
%
%
Therefore, from $v_1$'s perspective, the probability that a $\mathbf{z} \in \tilde{X}_3$ passes the checks, $\mathbf{P}(\mathbf{z} \text{ passes check})$, is: \vspace*{-0.1cm}\[0\cdot \mathbf{P}(x_2 + e_2 \notin \tilde{X}_2) + \frac{\sum_{k=0}^{r_{1\rightarrow 2}} \binom{n}{k}}{2^{(h+n)}}\cdot \mathbf{P}(x_2 + e_2 \in \tilde{X}_2).\] Similarly, $\mathbf{P}(x_2 + e_2 \in \tilde{X}_2) = \frac{\sum_{k=0}^{r_{2\rightarrow 1}} \binom{n}{k}}{2^{(h+n)}}$, and $ Vol( \tilde{X}_3) = \frac{\sum_{k=0}^{r_{3\rightarrow 1}} \binom{n}{k}}{2^{h}}$. Then, the probability that $v_3$ is undetected from $v_1$'s perspective is the probability that \emph{at least one} $\mathbf{z} \in \tilde{X}_3$ passes the check:
\begin{align*}
\mathbf{P}(&\text{A malicious } v_3 \text{ is undetected from $v_1$'s perspective})\\
&= \min\{1, \mathbf{P}(\mathbf{z} \text{ passes check})\cdot Vol( \tilde{X}_3)\}.
\end{align*}
Note that $\mathbf{P}(\mathbf{z} \text{ passes check})\cdot Vol( \tilde{X}_3)$ is the expected number of $\mathbf{z} \in \tilde{X}_3$ that passes the check; thus, given a high enough $\mathbf{P}(\mathbf{z} \text{ passes check})$, would exceed 1. Therefore, we take $\min\{1, \mathbf{P}(\mathbf{z} \text{ passes check})\cdot Vol( \tilde{X}_3)\}$ to get a valid probability. This proves the statement.
\end{proof}

\begin{lemma}\label{thm:s2}
The probability that a malicious $v_3$ is undetected from $v_2$'s perspective is given by
\[
\min\biggl\{1, \frac{\sum_{k=0}^{r_{1\rightarrow 2}} \binom{n}{k}}{2^{(h+n)}}\cdot \frac{\sum_{k=0}^{r_{2\rightarrow 1}} \binom{n}{k}}{2^{(h+n)}}\cdot \frac{\sum_{k=0}^{r_{3\rightarrow 2}} \binom{n}{k}}{2^{h}}\biggl\},
\]
where $v_2$ overhears $\mathbf{\tilde{x}_3}$ from $v_3$, and the probability that the interference channel from $v_3$ to $v_2$ outputs $\mathbf{\tilde{x}_3}$ given $\mathbf{x} \in B(\mathbf{\tilde{x}_3}, r_{3\rightarrow 2})$ is greater than $1-\epsilon$.

\end{lemma}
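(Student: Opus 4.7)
The plan is to mirror the proof of Lemma \ref{thm:s1} by exploiting the symmetry of the two-hop network in the roles of $v_1$ and $v_2$. Both are sources that overhear each other's transmissions as well as $v_3$'s transmission, and the check each performs has identical structure; only the subscripts of the interference-channel radii change.

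First I would set up the analogous quantities from $v_2$'s perspective. Node $v_2$ knows $x_2$ and correctly receives the header information $\mathbf{a_3},\mathbf{h_{I_3}},\mathbf{h_{x_3}}$; it overhears $\mathbf{\tilde{x}_1}$ from $v_1$ through the interference channel with statistics captured by $r_{1\rightarrow 2}$, and $\mathbf{\tilde{x}_3}$ from $v_3$ through the channel captured by $r_{3\rightarrow 2}$. Define
\[
\tilde{X}_1 = \{\mathbf{x}\ |\ h(x)=h(x_1)\}\cap B(\mathbf{\tilde{x}_1},r_{1\rightarrow 2}),\qquad \tilde{X}_3 = \{\mathbf{x}\ |\ h(x)=h(x_3)\}\cap B(\mathbf{\tilde{x}_3},r_{3\rightarrow 2}),
\]
which play here the roles that $\tilde{X}_2$ and $\tilde{X}_3$ did in Lemma \ref{thm:s1}.

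Next I would repeat the two-case analysis with $v_1$ and $v_2$ exchanged. Pick any candidate $\mathbf{z}\in \tilde{X}_3$, write $z=\alpha_1(x_1+e_1)+\alpha_2 x_2$ for some $e_1$, and consider: Case 1, $x_1+e_1\notin \tilde{X}_1$, in which case $v_3$ fails $v_2$'s check; Case 2, $x_1+e_1\in\tilde{X}_1$, in which case $v_3$ passes $v_2$'s check but must still pass $v_1$'s check. Because $\alpha_1$ and $\alpha_2$ are uniformly random in a field of size $2^n$, rewriting $z=\alpha_1 x_1+\alpha_2(x_2+e_2)$ yields a uniformly random $e_2$, so the probability that $\mathbf{x_2}+\mathbf{e_2}$ lands in $v_1$'s filter set $\tilde{X}_2=\{\mathbf{x}\,|\,h(x)=h(x_2)\}\cap B(\mathbf{\tilde{x}_2},r_{2\rightarrow 1})$ is $\frac{\sum_{k=0}^{r_{2\rightarrow 1}}\binom{n}{k}}{2^{h+n}}$, exactly as in Lemma \ref{thm:s1}. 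Combining with the Case-2 factor $\mathbf{P}(x_1+e_1\in\tilde{X}_1)=\frac{\sum_{k=0}^{r_{1\rightarrow 2}}\binom{n}{k}}{2^{h+n}}$ gives the probability that a fixed $\mathbf{z}$ passes both checks.

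Finally, I would take a union bound over candidates $\mathbf{z}\in\tilde{X}_3$: using $|\tilde{X}_3|=\frac{\sum_{k=0}^{r_{3\rightarrow 2}}\binom{n}{k}}{2^{h}}$ and the fact that the expected number of surviving $\mathbf{z}$ is the product of these three quantities, the probability that at least one survives is at most $\min\{1,\cdot\}$ of the product, yielding the claimed formula. The only delicate bookkeeping is ensuring that each binomial ratio carries the correct directional subscript — $r_{1\rightarrow 2}$ for $v_2$'s overhearing of $v_1$, $r_{2\rightarrow 1}$ for $v_1$'s overhearing of $v_2$ (inherited from the Case-2 cross-check), and $r_{3\rightarrow 2}$ for $v_2$'s overhearing of $v_3$ — and this is really the only step where one might slip; the substantive probabilistic content is identical to Lemma \ref{thm:s1}.
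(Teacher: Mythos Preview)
Your proposal is correct and matches the paper's approach exactly: the paper's own proof is the single line ``By similar analysis as in proof of Lemma~\ref{thm:s1},'' and you have spelled out precisely that symmetry argument, swapping the roles of $v_1$ and $v_2$ and tracking the corresponding radii $r_{1\rightarrow 2}$, $r_{2\rightarrow 1}$, $r_{3\rightarrow 2}$. Your bookkeeping of the directional subscripts is right, and the structure (two-case split, uniform-randomization via field multiplication, union bound over $\tilde{X}_3$) is identical to Lemma~\ref{thm:s1}.
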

\begin{proof}
By similar analysis as in proof of Lemma \ref{thm:s1}.
\end{proof}

\begin{theorem}\label{thm:main}
The probability of misdetection, $\beta$, is:
\[
\beta = \min\biggl\{1, \frac{\sum_{k=0}^{r_{1\rightarrow 2}} \binom{n}{k}}{2^{(h+n)}}\cdot \frac{\sum_{k=0}^{r_{2\rightarrow 1}} \binom{n}{k}}{2^{(h+n)}}\cdot \frac{1}{2^h} \sum_{k=0}^{r} \binom{n}{k}\biggl\},
\]
where $r = \min\{r_{3\rightarrow 1},r_{3\rightarrow 2}\}$.
\end{theorem}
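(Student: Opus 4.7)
The plan is to reduce Theorem~\ref{thm:main} to the two per-watcher estimates of Lemmas~\ref{thm:s1} and \ref{thm:s2} by recognizing that a misdetection occurs precisely when a malicious $v_3$ simultaneously evades both $v_1$'s and $v_2$'s checks. So I would start by writing $\beta = \mathbf{P}(v_3 \text{ undetected by } v_1 \text{ and by } v_2)$ and then bound this joint event by the expected number of ``bad'' candidate payloads that both watchers would accept.

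First, I would dissect the proof of Lemma~\ref{thm:s1} and isolate which factors are watcher-dependent and which are not. The event ``a specific corrupted value $\mathbf{z}$ passes $v_1$'s check'' factored into three ingredients: (i) $\mathbf{z} \in \tilde{X}_3$ as seen by $v_1$, contributing $\frac{1}{2^h}\sum_{k=0}^{r_{3\rightarrow 1}}\binom{n}{k}$; (ii) the implied displacement $e_2$ leaves $x_2 + e_2 \in \tilde{X}_2$; and (iii) via the algebraic rewriting $\alpha_1 x_1 + \alpha_2(x_2+e_2) = \alpha_1(x_1+e_1) + \alpha_2 x_2$, the companion $e_1$ leaves $x_1 + e_1 \in \tilde{X}_1$. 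Factors (ii) and (iii) are structural consistency constraints tied to the channels $v_2 \to v_1$ and $v_1 \to v_2$; they are identical to those appearing in Lemma~\ref{thm:s2} and therefore should be counted only once in the joint event. Only factor (i) is watcher-specific: in Lemma~\ref{thm:s2} it becomes $\frac{1}{2^h}\sum_{k=0}^{r_{3\rightarrow 2}}\binom{n}{k}$, because $v_2$ uses its own ball $B(\mathbf{\tilde{x}_3^{(v_2)}}, r_{3\rightarrow 2})$ around its own overhearing of $v_3$'s transmission.

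Next, I would handle the joint requirement on (i). For $\mathbf{z}$ to be considered plausible by both watchers at once, it must lie in the intersection $\tilde{X}_3^{(v_1)} \cap \tilde{X}_3^{(v_2)}$ of two Hamming balls (around two different noisy copies of $\mathbf{x_3}$), restricted to the common hash class $\{h(x)=h(x_3)\}$. Since the intersection of two balls is contained in the smaller of the two, and the hash constraint contributes an independent factor of $2^{-h}$, the joint volume is bounded by $\frac{1}{2^h}\sum_{k=0}^{r}\binom{n}{k}$ with $r = \min\{r_{3\rightarrow 1}, r_{3\rightarrow 2}\}$. Multiplying by the two shared consistency factors (ii) and (iii) yields the product in the statement, and I would close with the same expected-count / cap-at-one step used in Lemma~\ref{thm:s1} to guarantee a valid probability.

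The main obstacle is justifying that (ii) and (iii) appear with multiplicity one in the joint event rather than being duplicated across the two watchers, and correspondingly that only factor (i) gets ``intersected.'' This requires being careful that the randomness in $x_1+e_1$ and $x_2+e_2$ is determined by $v_3$'s single algebraic combination and by the two fixed overheard copies $\mathbf{\tilde{x}_1}$ (at $v_2$) and $\mathbf{\tilde{x}_2}$ (at $v_1$), not by which watcher we are analyzing. Once that independence-of-perspective is made explicit, the intersection-of-balls estimate combines with the shared factors in a straightforward way to produce the formula for $\beta$.
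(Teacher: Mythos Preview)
Your plan is correct and would yield the stated formula, but it is considerably more elaborate than what the paper actually does. The paper's own proof is a one-liner: it simply invokes Lemmas~\ref{thm:s1} and~\ref{thm:s2} and takes the \emph{minimum} of the two bounds. Since those two lemmas produce formulas that are identical in the first two factors and differ only in the third factor (with $r_{3\to 1}$ versus $r_{3\to 2}$), the minimum over the two bounds amounts to replacing that last radius by $r=\min\{r_{3\to 1},r_{3\to 2}\}$, immediately giving the theorem. There is no separate intersection-of-balls computation and no explicit disentangling of shared versus watcher-specific factors; the paper treats $\beta$ as bounded by each lemma's quantity and picks the tighter one.

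Your route---writing $\beta$ as the probability of the joint undetected event, isolating the watcher-independent constraints (ii) and (iii), and bounding the watcher-specific volume (i) via $\tilde{X}_3^{(v_1)}\cap\tilde{X}_3^{(v_2)}\subseteq$ the smaller ball---is a legitimate and arguably more transparent derivation of why the $\min$ appears only on the $r_{3\to j}$ radius and not on the other factors. It buys you a cleaner probabilistic story for the joint event, at the cost of redoing work already packaged inside the lemmas. The paper's shortcut buys brevity: once the two lemmas are in hand, the theorem is immediate by monotonicity of $\sum_{k\le r}\binom{n}{k}$ in $r$. Both approaches land on the same formula; yours simply unpacks what the paper leaves implicit.
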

\begin{proof}
The probability of misdetection is the minimum of the probability that $v_1$ and $v_2$ do not detect a malicious $v_3$. Therefore, by Lemma \ref{thm:s1} and \ref{thm:s2}, the statement is true.
\end{proof}

Theorem \ref{thm:main} shows that the probability of misdetection $\beta$ decreases with the hash size, as the hashes restrict the space of consistent codewords. In addition, since $r_{1\rightarrow 2}$, $r_{2\rightarrow 1}$, $r_{3\rightarrow 1}$, and $r_{3\rightarrow 2}$ represent the uncertainty introduced by the interference channels, $\beta$ increases with them. Lastly and the most interestingly, $\beta$ decreases with $n$, since $\sum^r_{k=0} \binom{n}{k} < 2^n$ for $r < n$. This is because network coding randomizes the messages over a field whose size is increasing exponentially with $n$, and this makes it difficult for an adversary to introduce errors without introducing inconsistencies.

We can apply Theorem \ref{thm:main} even when $v_1$ and $v_2$ cannot overhear each other. In this case, both $r_{1\rightarrow 2}$ and $r_{2 \rightarrow 1}$ equal to $n$, giving the probability of misdetection, $\beta = \min\{1, \sum_{k=0}^r \binom{n}{k}/8^h\}$ where $r = \min\{r_{3\rightarrow 1},r_{3\rightarrow 2}\}$. Here, $\beta$ highly depends on $h$, the size of the hash, as $v_1$ and $v_2$ are only using their own message and the overheard hashes.

The algebraic approach results in an analysis with exact formulae for $\gamma$ and $\beta$. In addition, these formulae are conditional probabilities; as a result, they hold regardless of a priori knowledge of whether $v_3$ is malicious or not. However, performing algebraic analysis is not very extensible with growing $m$.

\section{Algebraic Watchdog for Two-hop Network}\label{sec:watchdog_multisource}
We extend the algebraic watchdog to a more general two-hop network, as in Figure \ref{fig:watchdog_general}. We shall develop upon the trellis introduced in Section \ref{sec:watchdog_example}, and formally present a graphical representation of the inference process performed by a node performing algebraic watchdog on its downstream neighbor.

There are three main steps in performing the algebraic watchdog. First, we need to infer the original messages from the overheard information, which is captured by the transition matrix in Section \ref{sec:watchdog_transitionprob}. The second step consists of forming an opinion regarding what the next-hop node $v_{m+1}$ \emph{should} be sending, which is inferred using a trellis structure as shown in Section \ref{sec:watchdog_trellis} and a Viterbi-like algorithm in Section \ref{sec:watchdog_viterbi}. Finally, we combine the inferred information with what we overhear from $v_{m+1}$ to make a decision on how $v_{m+1}$ is behaving, as discussed in Section \ref{sec:watchdog_decision}. Figure \ref{fig:watchdog_inference} illustrates these three steps.

\begin{figure*}[tbp]
\begin{center}\hspace*{-.8cm}
\includegraphics[width=1.1\textwidth]{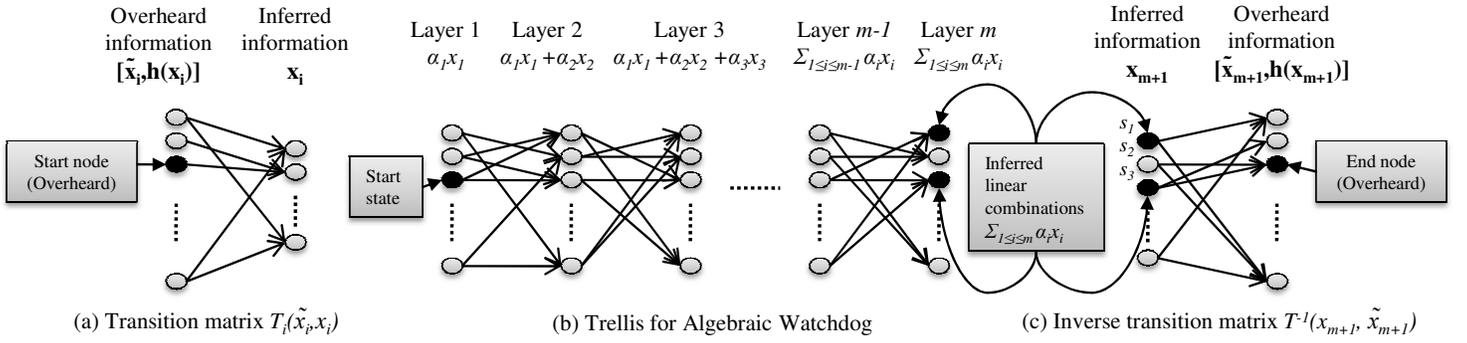}
\end{center}\vspace*{-.2cm}
\caption{Graphical representation of the inference process at node $v_1$. In the trellis, the transition probability from Layer $i-1$ to Layer $i$ is given by $T_i(\tilde{x}_i,x_i)$, which is shown in (a).}\vspace{-.3cm}\label{fig:watchdog_inference}\label{fig:watchdog_trellis}\label{fig:watchdog_transprob}\label{fig:watchdog_invtrans}
\end{figure*}

\subsection{Transition matrix}\label{sec:watchdog_transitionprob}

We define a \emph{transition matrix} $T_i$ to be a $2^{n(1-H(\frac{d_i}{n}))+\delta}\times 2^{n(1-H(\frac{d_i}{n}))}$ matrix, where $H(\cdot)$ is the entropy function.
\begin{align*}
T_i(\tilde{x}_i, y) &=
\begin{cases}
\frac{p_i(\tilde{x}_i, y)}{\mathcal{N}}, &\text{if $h(y) = h(x_i)$}\\
0, &\text{otherwise}
\end{cases},\\
p_i(\tilde{x}_i, y) &= p_{i1}^{\Delta(\mathbf{\tilde{x}_i, y})}(1-p_{i1})^{n-\Delta(\mathbf{\tilde{x}_i, y})},\\
\mathcal{N} &= \sum_{\{y|h(y)=h(x_i)\}} p_i(\tilde{x}_i, y),
\end{align*}
where $\Delta(\mathbf{x}, \mathbf{y})$ gives the Hamming distance between codewords $\mathbf{x}$ and $\mathbf{y}$. In other words, $v_1$ computes $\tilde{X}_i = \{x|h(x) = h(x_i)\}$ to be the list of \emph{candidates} of $x_i$. For any overheard pair $\mathbf{[\tilde{x}_i, h(x_i)]}$, there are multiple candidates of $x_i$ (\ie $|\tilde{X}_i|$) although the probabilities associated with each inferred $x_i$ are different. This is because there are uncertainties associated with the wireless medium, represented by $BSC(p_{i1})$.

For each $x\in \tilde{X}_i$, $p_i(\tilde{x}_i, x)$ gives the probability of $x$ being the original codeword sent by node $v_i$ given that $v_1$ overheard $\tilde{x}_i$ under $BSC(p_{i1})$. Since we are only considering $x\in \tilde{X}_i$, we normalize the probabilities using $\mathcal{N}$ to get the \emph{transition probability} $T_i(\tilde{x}_i, x)$. Note $T_i(\tilde{x}_i, y)=0$ if $h(y) \ne h(x_i)$.

The structure of $T_i$ heavily depends on the collisions of the hash function $h(\cdot)$ in use. Note that the structure of $T_i$ is independent of $i$, and therefore, a single transition matrix $T$ can be precomputed for all $i \in [1, m]$ given the hash function $h(\cdot)$. A graphical representation of $T$ is shown in Figure \ref{fig:watchdog_transprob}a. For simplicity of notation, we represent $T$ as a matrix; however, the transition probabilities can be computed efficiently using hash collision lists as well.



\subsection{Watchdog trellis}\label{sec:watchdog_trellis}

Node $v_1$ uses the information gathered to generate a trellis, which is used to infer the valid linear combination that $v_{m+1}$ should transmit to $v_{m+2}$. As shown in Figure \ref{fig:watchdog_trellis}b, the trellis has $m$ layers: each layer may contain up to $2^n$ states, each representing the inferred linear combination so far. For example, Layer $i$ consist of all possible values of $\sum_{j=1}^i \alpha_j x_j$.

The matrices $T_i, i\in[2,m]$, defines the connectivity of the trellis.
 Let $s_1$ and $s_2$ be states in Layer $i-1$ and Layer $i$, respectively. Then, an edge $(s_1, s_2)$ exists if and only if
\[
\exists\ x \text{ such that } s_1 + \alpha_i x = s_2,\ T_i(\tilde{x}_i, x) \ne 0.
\]
We denote $w_e(\cdot, \cdot)$ to be the edge weight,
where $w_e(s_1, s_2)=T_i(\tilde{x}_i, x)$ if edge $(s_1, s_2)$ exists, and zero otherwise.

\subsection{Viterbi-like algorithm}\label{sec:watchdog_viterbi}

We denote $w(s,i)$ to be the weight of state $s$ in Layer $i$. Node $v_1$ selects a \emph{start state} in Layer 1 corresponding to $\alpha_1 x_1$, as shown in Figure \ref{fig:watchdog_trellis}. The weight of Layer 1 state is $w(s, 1) = 1$ if $s = \alpha_1 x_1$, zero otherwise. For the subsequent layers, multiple paths can lead to a given state, and the algorithm keeps the aggregate probability of reaching that state. To be more precise, $w(s, i)$ is:
\[
w(s, i) = \sum_{\forall s' \in \text{Layer } i-1} w(s', i-1)\cdot w_e(s',s).
\]
By definition, $w(s,i)$ is equal to the total probability of $s = \sum_{j=1}^i \alpha_j x_j$ given the overheard information. Therefore, $w(s, m)$ gives the probability that $s$ is the valid linear combination that $v_{m+1}$ should transmit to $v_{m+2}$. It is important to note that $w(s,m)$ is dependent on the channel statistics, as well as the overheard information. For some states $s$, $w(s,m) =0$, which indicates that state $s$ can not be a valid linear combination; only those states $s$ with $w(s,m)>0$ are the \emph{inferred candidate linear combinations}.

The algorithm introduced above is a dynamic program, and is similar to the Viterbi algorithm. Therefore, tools developed for dynamic programming/Viterbi algorithm can be used to compute the probabilities efficiently.

\subsection{Decision making}\label{sec:watchdog_decision}

Node $v_1$ computes the probability that the overheard $\tilde{x}_{m+1}$ and $h(x_{m+1})$ are consistent with the inferred $w(\cdot, m)$ to make a decision regarding $v_{m+1}$'s behavior. To do so, $v_1$ constructs an \emph{inverse transition matrix} $T^{-1}$, which is a $2^{n(1-\frac{d_{m+1}}{n})} \times 2^{n(1-\frac{d_{m+1}}{n})+\delta}$ matrix whose elements are defined as follows:
\begin{align*}
T^{-1}(y, \tilde{x}_{m+1}) &=
\begin{cases}
\frac{p_{m+1}(\tilde{x}_{m+1}, y)}{\mathcal{M}}, &\text{if $h(y) = h(x_{m+1})$}\\
0, &\text{otherwise}
\end{cases},\\
\mathcal{M} &= \sum_{\{y|h(y)=h(x_{m+1})\}} p_{m+1}(\tilde{x}_{m+1}, y).
\end{align*}
Unlike $T$ introduced in Section \ref{sec:watchdog_transitionprob}, $T^{-1}(x, \tilde{x}_{m+1})$ gives the probability of overhearing $[\tilde{x}_{m+1}, h(x_{m+1})]$ given that $x \in \{y|h(y)=h(x_{m+1})\}$ is the original codeword sent by $v_{m+1}$ and the channel statistics. Note that $T^{-1}$ is identical to $T$ except for the normalizing factor $\mathcal{M}$. A graphical representation of $T^{-1}$ is shown in Figure \ref{fig:watchdog_invtrans}c.

In Figure \ref{fig:watchdog_invtrans}c, $s_1$ and $s_3$ are the inferred candidate linear combinations, \ie $w(s_1, m)\ne 0$ and $w(s_2,m)\ne 0$; the \emph{end node} indicates what node $v_1$ has overheard from $v_{m+1}$. Note that although $s_1$ is one of the inferred linear combinations, $s_1$ is not connected to the end node. This is because $h(s_1) \ne h(x_{m+1})$. On the other hand, $h(s_2) = h(x_{m+1})$; as a result, $s_2$ is connected to the end node although $w(s_2, m)= 0$. We define an inferred linear combination $s$ as \emph{matched} if $w(s,m)>0$ and $h(s) = h(x_{m+1})$.

Node $v_1$ uses $T^{-1}$ to compute the total probability $p^*$ of hearing $[\tilde{x}_{m+1}, h(x_{m+1})]$ given the inferred linear combinations by computing the following equation:
\[
p^* = \sum_{\forall s} w(s, m)\cdot T^{-1}(s, \tilde{x}_{m+1}).
\]
Probability $p^*$ is the probability of overhearing $\tilde{x}_{m+1}$ given the channel statistics; thus, measures the likelihood that $v_{m+1}$ is consistent with the information gathered by $v_1$. Node $v_1$ can use $p^*$ to make a decision on $v_{m+1}$'s behavior. For example, $v_1$ can use a threshold decision rule to decide whether $v_{m+1}$ is misbehaving or not: $v_1$ claims that $v_{m+1}$ is malicious if $p^* \leq t$ where $t$ is a threshold value determined by the given channel statistics; 
otherwise, $v_1$ claims $v_{m+1}$ is well-behaving.

Depending on the decision policy used, we can use the hypothesis testing framework to analyze the probability of false positive and false negative. Section \ref{sec:watchdog_example} provides such analysis for the simple two-hop network with a simple decision policy -- if the inferred linear combination and the message overheard from the next hop node is non-empty, we declare the node well-behaving. However, the main purpose of this paper is to propose a method in which we can compute $p^*$, which can be used to establish trust within a network. We note that it would be worthwhile to look into specific decision policies and their performance (\ie false positive/negative probabilities) as in \cite{algebraicwatchdog}.


%
%

\section{Analysis for Two-hop Network}\label{sec:watchdog_analysis}

We provide an analysis for the performance of algebraic watchdog for two-hop network.

\begin{theorem}\label{thm:number}
Consider a two-hop network as shown in Figure \ref{fig:watchdog_general}. Consider $v_j$, $j \in [1, m]$. Then, the number of \emph{matched} codewords is:
\[
2^{n\left[\sum_{i \ne j, i\in[1, m+1]} \left(H(p_{ij})-H(\frac{d_i}{n})\right) -1 \right] - m\delta}.
\]
\end{theorem}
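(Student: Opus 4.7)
The plan is to reduce the count to two ingredients: (i) the sizes of the per-source candidate sets $\tilde{X}_i$ consistent with each overheard packet, and (ii) a random-intersection estimate of the Layer-$m$ trellis support with $\tilde{X}_{m+1}$.

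\textbf{Step 1 (candidate sizes).} For each $i\in[1,m+1]\setminus\{j\}$, I would estimate $|\tilde{X}_i|$, the number of codewords of $\mathcal{C}_i$ that are hash-consistent with $h(x_i)$ and lie in the typical ball of radius $\approx np_{ij}$ around the overheard $\tilde{x}_i$. A standard typical-set argument for $BSC(p_{ij})$ gives ball volume $\approx 2^{nH(p_{ij})}$; the rate $R_i=1-H(d_i/n)$ of $\mathcal{C}_i$ yields codeword density $2^{-nH(d_i/n)}$ in $\{0,1\}^n$; and the $\delta$-bit polynomial hash prunes by an independent factor $2^{-\delta}$. Multiplying these densities against the ambient volume gives $|\tilde{X}_i|\approx 2^{n(H(p_{ij})-H(d_i/n))-\delta}$.

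\textbf{Step 2 (trellis support).} Each Layer-$m$ state with $w(s,m)>0$ has the form $s=\alpha_j x_j+\sum_{i\in[1,m]\setminus\{j\}}\alpha_i\hat{x}_i$ with $\hat{x}_i\in\tilde{X}_i$ and with $x_j$ known exactly at $v_j$. Because the coding coefficients $\alpha_i$ are drawn uniformly over $\mathbf{F}_{2^n}$, each product $\alpha_i\hat{x}_i$ behaves like a nearly-uniform, independent draw from $\mathbf{F}_{2^n}$, so the set $W=\{s:w(s,m)>0\}$ behaves like a uniformly-placed subset of size
\[
|W|\;\approx\;\prod_{i\in[1,m]\setminus\{j\}}|\tilde{X}_i|
\]
in the sparse regime $\prod_i|\tilde{X}_i|\le 2^n$.

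\textbf{Step 3 (intersection and simplification).} By the definition of \emph{matched} in Section~\ref{sec:watchdog_decision}, combined with the non-zero support structure of $T^{-1}(\cdot,\tilde{x}_{m+1})$, a matched codeword must lie in $W\cap\tilde{X}_{m+1}$. Treating $W$ and $\tilde{X}_{m+1}$ as two approximately-independent, nearly-uniform random subsets of $\mathbf{F}_{2^n}$, the expected intersection size is $|W|\cdot|\tilde{X}_{m+1}|/2^n$. Substituting the estimates from Steps 1 and 2, the denominator $2^n$ contributes the ``$-1$'' inside the bracket of the exponent, while the $m$ factors of $2^{-\delta}$ (one per $i\in[1,m+1]\setminus\{j\}$) give the $-m\delta$ term, yielding exactly the claimed expression.

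The main obstacle is Step 3: rigorously justifying the random-subset approximation, i.e.\ that $W$ is essentially uniformly spread in $\mathbf{F}_{2^n}$ and that its intersection with $\tilde{X}_{m+1}$ concentrates around its expectation. The uniformity of $W$ leans on the randomness of the $\alpha_i$'s combined with the ``genericity'' of the codewords in $\tilde{X}_i$, while (near-)independence from $\tilde{X}_{m+1}$ follows because $\tilde{X}_{m+1}$ is determined by the $v_{m+1}\to v_j$ interference channel, which is independent of the channels defining the other $\tilde{X}_i$. A secondary technicality is saturation when $\prod_i|\tilde{X}_i|$ approaches $2^n$, which would formally require a $\min\{1,\cdot\}$ cap as in Lemmas~\ref{thm:s1}--\ref{thm:s2}; the stated theorem implicitly works in the sparse regime where no such cap is needed.
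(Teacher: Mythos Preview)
Your proposal is correct and follows essentially the same route as the paper: per-source candidate counts $|\tilde{X}_i|\approx 2^{n(H(p_{ij})-H(d_i/n))-\delta}$ via typical-ball volume, code density, and hash pruning; their product as the trellis support $|W|$; and a uniform random-intersection estimate $|W|\cdot|\tilde{X}_{m+1}|/2^n$ justified by the randomizing effect of the finite-field coefficients. The paper's proof is in fact less explicit than yours about the heuristic nature of the uniformity/independence step and about the sparse-regime caveat, so your identification of these as the main obstacles is apt.
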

\begin{proof}
Without loss of generality, we consider $v_1$. The proof uses on concepts and techniques developed for list-decoding \cite{listdecoding}. We first consider the overhearing of $v_k$'s transmission, $k \in [2, m]$. Node $v_1$ overhears $\tilde{x}_k$ from $v_k$. The noise introduced by the overhearing channel is characterized by $BSC(p_{k1})$; thus, $E[\Delta(\mathbf{x_k, \tilde{x}_k})] = np_{k1}$. Now, we consider the number of codewords that are within $B(\tilde{x}_k, np_{k1})$, the Hamming ball of radius $np_{k1}$ centered at $\tilde{x}_k$ is
$|B(\tilde{x}_k, np_{k1})| = 2^{n(H(p_{k1})-H(\frac{d_k}{n}))}$.
Node $v_1$ overhears the hash $h(x_k)$; thus, the number of codewords that $v_1$ considers is reduced to $2^{n(H(p_{k1})-H(\frac{d_k}{n}))-\delta}$.
Using this information, $v_1$ computes the set of inferred linear combinations, \ie $s$ where $w(s, m) >0$. Note that $v_1$ knows precisely the values of $x_1$. Therefore, the number of inferred linear combinations is upper bounded by:
\begin{align}
\prod_{k\in [2, m]}&\left(2^{n\left(H(p_{k1})-H(\frac{d_k}{n})\right)-\delta}\right)
\\ &= 2^{n\left[\sum_{k\in [2,m]}\left( H(p_{k1}) - H(\frac{d_k}{n})\right)\right] - (m-1)\delta} \label{eq:num_inferred}
\end{align}
Due to the finite field operations, these inferred linear combinations are randomly distributed over the space $\{0,1\}^n$.

Now, we consider the overheard information, $\tilde{x}_{m+1}$ from the downstream node $v_{m+1}$. By similar analysis as above, we can derive that there are $2^{n(H(p_{m+1, 1})-H(\frac{d_{m+1}}{n}))-\delta}$
codewords in the hamming ball $B(\tilde{x}_{m+1}, np_{m+1,1})$ with hash value $h(x_{m+1})$. Thus, the probability that a randomly chosen codeword in the space of $\{0,1\}^n$ is in $B(\tilde{x}_{m+1}, np_{m+1,1}) \cap \{x| h(x) = h(x_{m+1})\}$ is give by
\begin{equation}\label{eq:num_relay}
\frac{2^{n(H(p_{m+1, 1})-H(\frac{d_{m+1}}{n}))-\delta}}{2^n}.
\end{equation}

Then, the expected number of \emph{matched} codewords is the product of Equations (\ref{eq:num_inferred}) and (\ref{eq:num_relay}).
\end{proof}

If we assume that the hash is of length $\delta= \varepsilon n$, then the statement in Theorem \ref{thm:number} is equal to:
\begin{equation}
2^{n\left[\sum_{i \ne j, i\in[1, m+1]} H(p_{ij})- \left( \sum_{i \ne j, i\in[1, m+1]} H(\frac{d_i}{n}) +1  + m\varepsilon \right)\right]}.
\end{equation}
This highlights the tradeoff between the quality of overhearing channel and the redundancy (introduced by
$\mathcal{C}_i$'s
and the hash $h$). If enough redundancy is introduced, then $\mathcal{C}_i$ and $h$ together form an error-correcting code for the overhearing channels; thus, allows exact decoding to a single matched codeword.


The analysis also shows how adversarial errors can be interpreted. Assume that $v_{m+1}$ wants to inject errors at rate $p_{adv}$. Then, node $v_1$, although has an overhearing $BSC(p_{m+1,1})$, effectively experiences an error rate of $p_{adv} + p_{m+1,1} - p_{adv} \cdot p_{m+1, 1}$. Note that this does not change the set of the inferred linear combinations; but it affects $\tilde{x}_{m+1}$. Thus, overall, adversarial errors affect the set of matched codewords and the distribution of $p^*$. As we shall see in Section \ref{sec:watchdog_simulation}, the difference in distribution of $p^*$ between a well-behaving relay and adversarial relay can be used to detect malicious behavior.

\section{Protocol for Algebraic Watchdog}\label{sec:watchdog_multihop}

We use the two-hop algebraic watchdog from Section \ref{sec:watchdog_multisource} in a hop-by-hop manner to ensure a globally secure network. In Algorithm \ref{alg:protocol}, we present a distributed algorithm for nodes to secure the their local neighborhood. Each node $v$ transmits/receives data as scheduled; however, node $v$ randomly chooses to check its neighborhood, at which point node $v$ listens to neighbors transmissions to perform the two-hop algebraic watchdog from Section \ref{sec:watchdog_multisource}.

\begin{algorithm}[tbp]
\ForEach{node $v$}{
According to the schedule, transmit and receive data;
\If{$v$ decides to check its neighborhood}
{
Listen to neighbors' transmissions;\\
\ForEach{downstream neighbor $v'$}
{
Perform Two-hop Algebraic Watchdog on $v'$;
}}}
\caption{Distributed algebraic watchdog at $v$.}\label{alg:protocol}
\end{algorithm}

\begin{corollary}\label{thm:v}
Consider $v_{m+1}$ as shown in Figure \ref{fig:watchdog_general}. Assume that the downstream node $v_{m+2}$ is well-behaving, and thus, forces
$\mathbf{h_{x_{m+1}}} = h(x_{m+1})$. Let $\mathbf{\underline{p_i}}$ be the packet received by $v_{m+1}$ from parent node $v_i \in P(v)$. Then, if there exists at least one well-behaving parent $v_j \in P(v)$, $v_{m+1}$ cannot inject errors beyond the overhearing channel noise ($p_{m+1,j}$) without being detected.
\end{corollary}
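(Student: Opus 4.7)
The plan is to reduce Corollary \ref{thm:v} to the two-hop analysis already carried out in Sections \ref{sec:watchdog_multisource} and \ref{sec:watchdog_analysis} by instantiating it at the well-behaving parent $v_j \in P(v_{m+1})$. Concretely, I would have $v_j$ play the role of the monitoring node $v_1$ of Figure \ref{fig:watchdog_general}: it knows its own payload $x_j$ exactly, it overhears $\tilde{x}_i$ and $h(x_i)$ from every sibling $v_i \in P(v_{m+1})\setminus\{v_j\}$, and it overhears $\tilde{x}_{m+1}$ and $h(x_{m+1})$ from $v_{m+1}$. Running the Viterbi-like algorithm of Section \ref{sec:watchdog_viterbi} on this data yields the inferred candidate linear combinations $\{s : w(s,m)>0\}$, and combining these with $T^{-1}$ gives the consistency score $p^{*}$ introduced in Section \ref{sec:watchdog_decision}.

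The next step is to use the hypothesis that $v_{m+2}$ is well-behaving so as to pin down $v_{m+1}$'s degrees of freedom. Because $v_{m+2}$ verifies $\mathbf{h_{x_{m+1}}}$ against its received payload, $v_{m+1}$ cannot decouple the hash from the transmitted symbol; any corrupted payload $\hat{x}_{m+1} = x_{m+1} \oplus e$ must be accompanied by $\mathbf{h_{x_{m+1}}} = h(\hat{x}_{m+1})$. This is exactly the situation analyzed in Section \ref{sec:watchdog_example} (see the paragraph preceding Section \ref{sec:watchdog_graphicalmodel}), so the adversary's freedom collapses to choosing the error pattern $e$ on the payload, after which the hash is forced.

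With this setup, I would apply Theorem \ref{thm:number} to count the matched codewords from $v_j$'s perspective. As observed at the end of Section \ref{sec:watchdog_analysis}, injecting errors at rate $p_{adv}$ makes $v_j$'s effective overhearing channel behave like a BSC with crossover $p_{adv}+p_{m+1,j}-p_{adv}\cdot p_{m+1,j}$, while the inferred-candidate set computed from the trellis is unchanged and matches what a well-behaving $v_{m+1}$ would produce under $BSC(p_{m+1,j})$. Thus the distribution of $p^{*}$ under the adversarial hypothesis is shifted toward the tail of its well-behaving distribution, and any threshold decision rule calibrated to the channel $BSC(p_{m+1,j})$ flags the transmission with probability bounded away from the false-alarm rate whenever $p_{adv}>0$ is strictly above the channel noise. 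Formally, one takes the Neyman-Pearson test of Section \ref{sec:watchdog_hypothesis} with $U=\tilde{x}_{m+1}$ and likelihoods derived from $T^{-1}$ composed with $w(\cdot,m)$; the log-likelihood ratio concentrates by a standard Chernoff/typicality argument on the $n$ payload bits.

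The main obstacle, and the part I would spend the most care on, is the concentration step, that is, showing that the matched-codeword count from Theorem \ref{thm:number} together with the hash-induced thinning makes the two distributions of $p^{*}$ (adversarial versus honest) separable for sufficiently large $n$ and appropriately chosen hash length $\delta=\varepsilon n$. Everything else is essentially bookkeeping on top of the two-hop machinery; the delicate point is ensuring that the candidate set does not accidentally absorb the adversarial shift, which is precisely where the redundancy condition $\sum_{i\ne j} H(p_{ij}) < \sum_{i\ne j} H(d_i/n) + 1 + m\varepsilon$ highlighted after Theorem \ref{thm:number} enters and drives the matched set toward a single codeword, making any deviation beyond channel noise detectable.
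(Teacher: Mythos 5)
Your proposal is correct and follows essentially the same route as the paper: the paper offers no formal proof of Corollary \ref{thm:v}, instead remarking immediately after the statement that it follows from Section \ref{sec:watchdog_analysis} --- i.e., the well-behaving parent $v_j$ runs the two-hop watchdog of Sections \ref{sec:watchdog_multisource}--\ref{sec:watchdog_analysis} with itself in the role of $v_1$, the well-behaving $v_{m+2}$ forces $\mathbf{h_{x_{m+1}}}=h(x_{m+1})$ so the adversary cannot decouple hash from payload, and error injection at rate $p_{adv}$ shifts $v_j$'s effective overhearing channel to $p_{adv}+p_{m+1,j}-p_{adv}\cdot p_{m+1,j}$, changing the distribution of $p^*$. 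Your added hypothesis-testing and concentration discussion goes beyond what the paper writes down (the paper supports the separability of the two $p^*$ distributions qualitatively and via simulation), but it is consistent with, not divergent from, the paper's argument.
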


Section \ref{sec:watchdog_analysis} noted that presence of adversarial error (at a rate above the channel noise) can be detected by a change in distribution of $p^*$. Corollary \ref{thm:v} does not make any assumptions on whether packets $\mathbf{\underline{p_i}}$'s are valid or not. Instead, the claim states that $v_{m+1}$ transmits a valid packet \emph{given} the packets $\mathbf{\underline{p_i}}$ it has received.


\begin{corollary}\label{thm:v2} Node $v$ can inject errors beyond the channel noise only if either of the two conditions are satisfied:
\begin{enumerate}
\item All its parent nodes $P(v) = \{u|(u,v)\in E_1\}$ are colluding Byzantine nodes;
\item All its downstream nodes, \ie receivers of the transmission $\mathbf{\underline{p_i}}$, are colluding Byzantine nodes.
\end{enumerate}
\end{corollary}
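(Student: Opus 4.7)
The plan is to establish Corollary \ref{thm:v2} as a direct contrapositive application of Corollary \ref{thm:v}. Suppose both conditions fail simultaneously, i.e.\ there exists at least one well-behaving parent $v_j \in P(v)$ and at least one well-behaving downstream receiver $v_d$ of $v$'s transmission. I will argue that under these assumptions, any error injection by $v$ at a rate exceeding the overhearing channel noise is detected, which is precisely the statement we want to prove.

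First I would invoke the well-behaving downstream node $v_d$ to play the role of $v_{m+2}$ in Corollary \ref{thm:v}. Since $v_d$ verifies the hash $\mathbf{h_{x_{m+1}}}$ against $h(x_{m+1})$ for every packet it receives, and since it is not colluding with $v$, the hash field transmitted by $v$ must satisfy $\mathbf{h_{x_{m+1}}} = h(x_{m+1})$ on the transmitted payload (otherwise $v_d$ would reject and raise an alarm). This establishes the hash-consistency hypothesis required by Corollary \ref{thm:v}, and it does so regardless of whether other downstream neighbors of $v$ are Byzantine, because a single honest receiver is sufficient to enforce the hash check on the broadcast packet.

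Next, I would appeal to the existence of the well-behaving parent $v_j$. By Corollary \ref{thm:v}, the presence of at least one honest parent together with the hash-enforcement at the downstream step is enough to guarantee that $v$ cannot inject errors above the overhearing channel noise level $p_{m+1,j}$ without being detected by $v_j$'s two-hop algebraic watchdog (as developed in Sections \ref{sec:watchdog_multisource} and \ref{sec:watchdog_analysis}, where the distribution of $p^*$ shifts detectably under adversarial error rates above the channel). Thus the two hypotheses jointly preclude stealthy error injection, which is exactly the contrapositive of the claim.

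The main subtlety, and where I would be most careful, is in the downstream direction: condition (2) requires \emph{all} downstream receivers to collude, not just the immediate next hop on a particular route. Because the wireless medium is broadcast, a single transmission by $v$ is heard by every node in its downstream hyperedge, and any one of them, if honest, can trigger the hash-verification that Corollary \ref{thm:v} relies upon. Making this explicit — that the broadcast nature of the channel means one honest downstream receiver suffices to enforce the hash equality on the same packet that upstream watchdogs are auditing — is the key observation that reduces the multi-receiver claim of Corollary \ref{thm:v2} to the single-downstream setting of Corollary \ref{thm:v}. Once this reduction is made, no additional calculation is needed.
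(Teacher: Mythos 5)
Your proposal is correct and follows essentially the same reasoning the paper gives (informally, in the remark following the corollary): the paper likewise reduces the claim to the contrapositive of Corollary \ref{thm:v}, with a well-behaving downstream node enforcing hash consistency and a well-behaving parent performing the watchdog check. Your explicit observation that the broadcast medium lets a single honest downstream receiver enforce the hash on the very packet the upstream watchdogs audit is a useful clarification, but it does not change the argument.
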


{\it Remark:} In Case 1), $v$ is not responsible to any well-behaving nodes. Node $v$ can transmit any packet without the risk of being detected by any well-behaving parent node. However, then, the min-cut to $v$ is dominated by adversaries, and the information flow through $v$ is completely compromised -- regardless of whether $v$ is malicious or not.

In Case 2), $v$ can generate any hash value since its downstream nodes are colluding adversaries. Thus, it is not liable to transmit a consistent hash, which is necessary for $v$'s parent nodes to monitor $v$'s behavior. However, note that $v$ is not responsible in delivering any data to a well-behaving node. Even if $v$ were well-behaving, it cannot reach any well-behaving node without going through a malicious node in the next hop. Thus, the information flow through $v$ is again completely compromised.

Therefore, Corollary \ref{thm:v2} shows that the algebraic watchdog can aid in ensuring correct delivery of data
 when the following assumption holds: for every intermediate node $v$ in the path between source to destination, $v$ has at least one well-behaving parent and at least one well-behaving child -- \ie there exists at least a path of well-behaving nodes. This is not a trivial result as we are not only considering a single-path network, but also multi-hop, multi-path network.

\section{Simulations}\label{sec:watchdog_simulation}

We present MATLAB simulation results that show the difference in distribution of $p^*$ between the well-behaving and adversarial relay. We consider a setup in Figure \ref{fig:watchdog_general}. We set all $p_{i1}$, $i \in [2, m]$ to be equal, and we denote this probability as $p_s = p_{i1}$ for all $i$. We denote $p_{adv}$ to be the probability at which the adversary injects error; thus, the effective error that $v_1$ observes from an adversarial relay is combined effect of $p_{m+1, 1}$ and $p_{adv}$. The hash function $h(x) = ax + b \mod 2^\delta$ is randomly chosen over $a, b \in \mathbf{F}_{2}^\delta$.

We set $n = 10$; thus, the coding field size is $2^{10}$. A typical packet can have a few hundreds to tens of thousand bits. Thus, a network coded packet with $n=10$ could have a few tens to a few thousands of symbols over which to perform algebraic watchdog. It may be desirable to randomize which symbols a node performs algebraic watchdog on, or when to perform algebraic watchdog. This choice depends not only on the security requirement, but also on the computational and energy budget of the node.

For each set of parameters, we randomly generate symbols from $\mathbf{F}_{2^{10}}$ ($n=10$ bits) and run algebraic watchdog. For each symbol, under a non-adversarial setting, we assume that only channel randomly injects bit errors to the symbol; under adversarial setting, both the channel and the adversary randomly inject bit errors to the symbol. For each set of parameters, we run the algebraic watchdog 1000 times. Thus, this is equivalent to running the algebraic watchdog on a moderately-sized packet (10,000 bits) or over several smaller packets, which are network coded using field size of $\mathbf{F}_{2^{10}}$.

For simplicity, nodes in the simulation do not use error-correcting codes; thus, $d_i = 0$ for all $i$. This limits the power of the algebraic watchdog; thus, the results shown can be further improved by using error correcting codes $\mathcal{C}_i$.

We denote $p^*_{adv}$ and $p^*_{relay}$ as the value of $p^*$ when the relay is adversarial and is well-behaving, respectively. We denote $var_{adv}$ and $var_{relay}$ to be the variance of $p^*_{adv}$ and $p^*_{relay}$. We shall show results that show the difference in distribution of $p^*_{adv}$ and $p^*_{relay}$ from $v_1$'s perspective. Note that this illustrates that only \emph{one} good parent node, \ie $v_1$ in our simulations, is sufficient to notice the difference in distribution of $p^*_{adv}$ and $p^*_{relay}$. Thus, confirming our analysis in Section \ref{sec:watchdog_multihop}. With more parent nodes performing the check independently, we can improve the probability of detection.


Our simulation results coincide with our analysis and intuition. Figure \ref{fig:matlab_padv} shows that adversarial above the channel noise can be detected. First of all, for all values of $p_{adv} > 0$, $p^*_{adv} < p^*_{relay}$; thus, showing that adversarial errors can be detected. Furthermore, the larger the adversarial error injection rate, the bigger the difference in the distributions of $p^*_{adv}$ and $p^*_{relay}$. When adversarial error rate is small, then the effective error $v_1$ sees in the packet can easily be construed as that of the channel noise, and thus, if appropriate channel error correcting code is used, can be corrected by the downstream node. As a result, the values/distributions of $p^*_{relay}$ and $p^*_{adv}$ are similar. However, as the adversarial error rate increases, there is a divergence between the two distributions. Note that the difference in the distributions of $p^*_{relay}$ and $p^*_{adv}$ is not only in the average value. The variance $var_{relay}$ is relatively constant throughout ($var_{relay}$ is approximately 0.18 throughout). On the other hand, $var_{adv}$ generally decreases with increase in $p_{adv}$. For small $p_{adv}$, the variance $var_{adv}$ is approximately 0.18; while for large $p_{adv}$, the variance $var_{adv}$ is approximately $0.08$. This trend intuitively shows that, with increase in $p_{adv}$, not only do we detect that the adversarial relay more often (since the average value of $p^*_{adv}$ decreases), but we are more confident of the decision.

\begin{figure}[tbp]
\begin{center}
\includegraphics[width=0.46\textwidth]{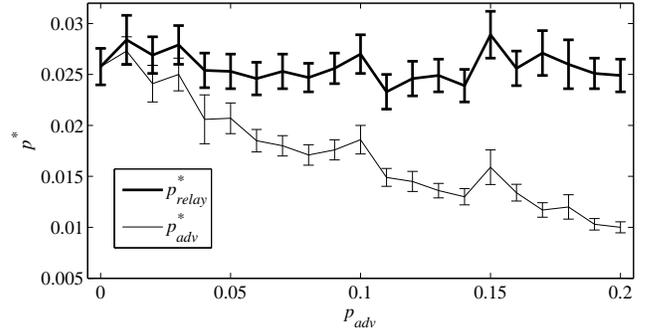}
\end{center}\vspace*{-.2cm}
\caption{The average value of $p^*$ with well-behaving relay (denoted $p^*_{relay}$) and adversarial relay (denoted $p^*_{adv}$) over 1000 random iterations of algebraic watchdog. The error bars represent the variance, $var_{relay}$ and $var_{adv}$. We set $m = 3$, $n = 10$, $\delta = 2$, and $p_s = p_{m+1, 1} = 10\%$. We vary $p_{adv}$, the adversary's error injection rate.}\label{fig:matlab_padv}
\end{figure}

Figure \ref{fig:matlab_delta} shows the affect of the size of the hash. With increase in redundancy (by using hash functions of length $\delta$), $v_1$ can detect malicious behavior better. This is true regardless of whether the relay is well-behaving or not. Node $v_1$'s ability to judge its downstream node increases with $\delta$. Thus, $p^*_{adv}$ for $\delta$ is generally higher than that of $\delta'$ where $\delta > \delta'$. This holds for $p^*_{relay}$ as well. However, for any fixed $\delta$, node $v_1$ can see a distinction between $p^*_{relay}$ and $p^*_{adv}$ as shown in Figure \ref{fig:matlab_delta}. A similar trend to that of Figure \ref{fig:matlab_padv} can be seen for  the distributions of $p^*_{adv}$ and $p^*_{relay}$ for each value of $\delta$. The interesting result is that even for $\delta = 0$, \ie we include no redundancy or hash, node $v_1$ is able to distinguish an adversarial relay from a well-behaving relay.

\begin{figure}[tbp]
\begin{center}
\includegraphics[width=0.5\textwidth]{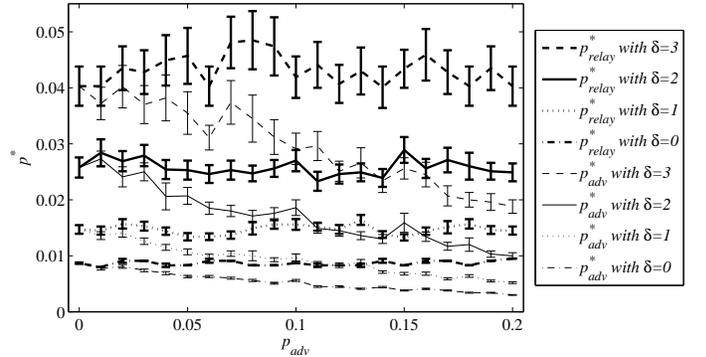}
\end{center}\vspace*{-.2cm}
\caption{The average value of $p^*$ with well-behaving relay (denoted $p^*_{relay}$) and adversarial relay (denoted $p^*_{adv}$) over 1000 random iterations of algebraic watchdog. We vary the value of $\delta$, the length of the hash function used, and $p_{adv}$, the adversary's error injection rate. The error bars represent the variance, $var_{relay}$ and $var_{adv}$. We set $m = 3$, $n = 10$, and $p_s = p_{m+1, 1} = 10\%$. }\label{fig:matlab_delta}\vspace*{-.3cm}
\end{figure}

Results in Figure \ref{fig:matlab_ps} confirms our intuition that the better $v_1$'s ability to collect information from $v_i$'s, $\i ne 1$, the better its detection ability. If node $v_1$ is able to infer better or overhear $x_i$ with little or no errors, the better its inference on what the relay node should be transmitting. Thus, as overhearing channel progressively worsens ($p_s$ increases), $v_1$'s ability to detect malicious behavior deteriorates; thus, unable to distinguish between a malicious and a well-behaving relay.

\begin{figure}
\begin{center}
\subfloat[Linear scale]{\includegraphics[width=0.24\textwidth]{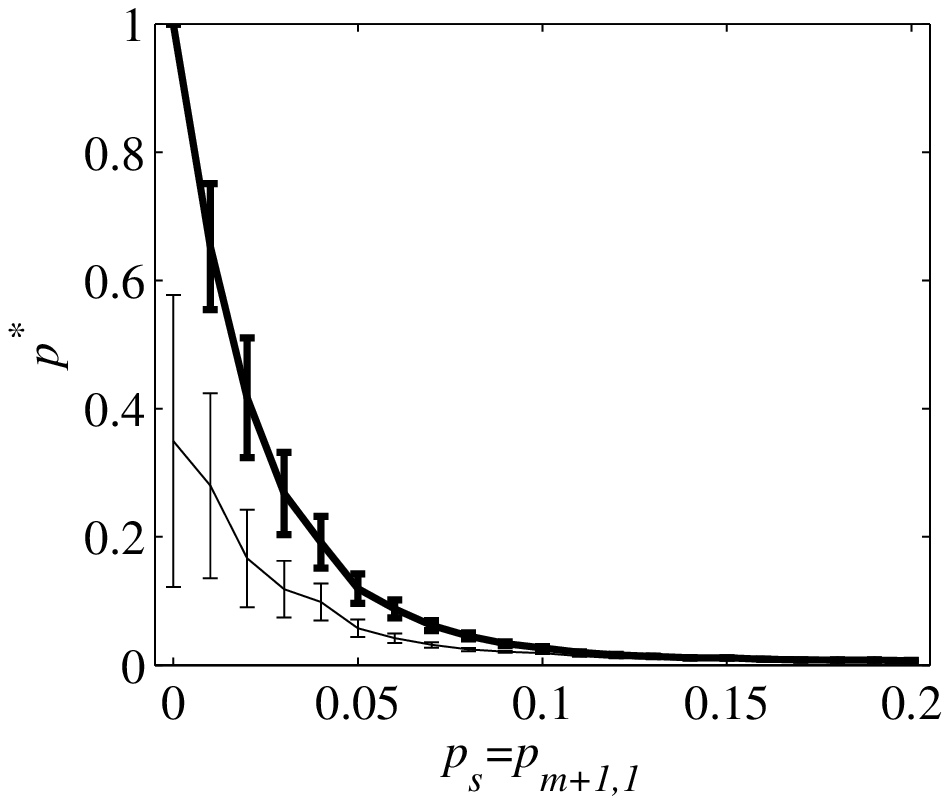}\label{fig:pslin}}
\subfloat[Log scale]{\includegraphics[width=0.242\textwidth]{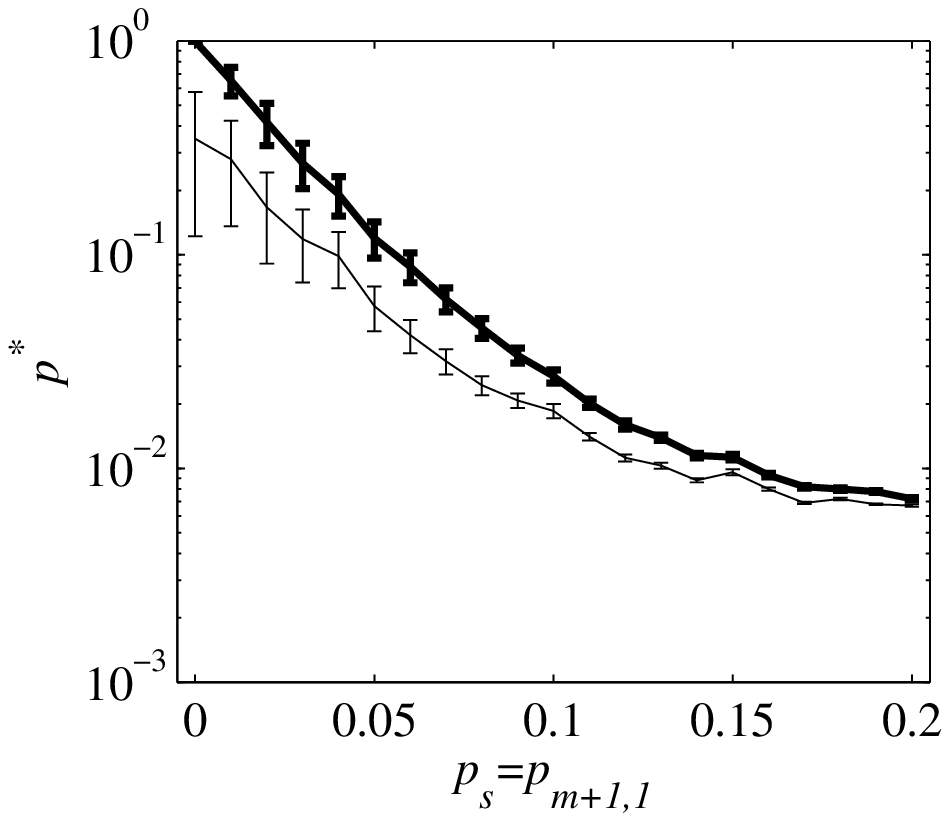}\label{fig:pslog}}
\end{center}\vspace*{-.3cm}\caption{The average value of $p^*$ with well-behaving relay (denoted $p^*_{relay}$) and adversarial relay (denoted $p^*_{adv}$) over 1000 random iterations of algebraic watchdog. We vary the value of $p_s = p_{m+1,1}$, the quality of overhearing channels. The error bars represent the variance, $var_{relay}$ and $var_{adv}$. We set $m = 3$, $n = 10$, and $p_{adv} = 10\%$.}\label{fig:matlab_ps}\vspace*{-.3cm}
\end{figure}

Finally, we note the effect of $m$, the number of nodes in the network, in Figure \ref{fig:matlab_m}. Node $v_1$'s ability to check $v_{m+1}$ is reduced with $m$. When $m$ increases, the number of messages $v_1$ has to infer increases, which increases the uncertainty within the system. However, it is important to note that as $m$ increases, there are more nodes $v_i$'s, $i \in [1, m]$ that can independently perform checks on $v_{m+1}$. This affect is not captured by the results shown in Figure \ref{fig:matlab_m}.

\begin{figure}
\begin{center}
\subfloat[Linear scale]{\includegraphics[width=0.24\textwidth]{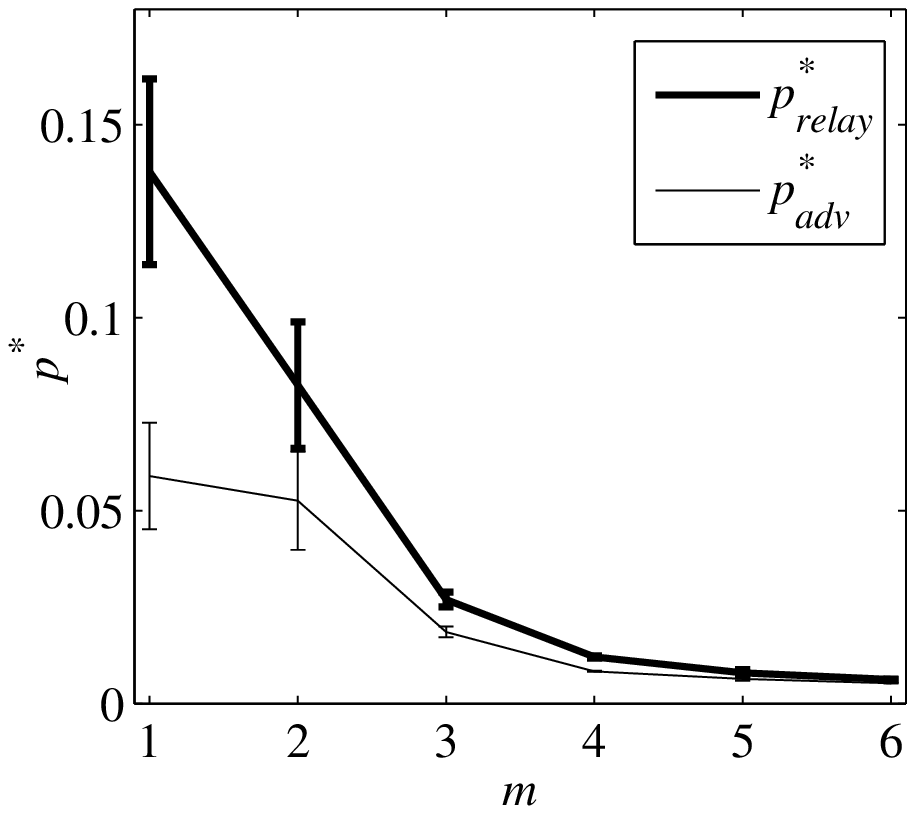}\label{fig:mlin}}
\subfloat[Log scale]{\includegraphics[width=0.231\textwidth]{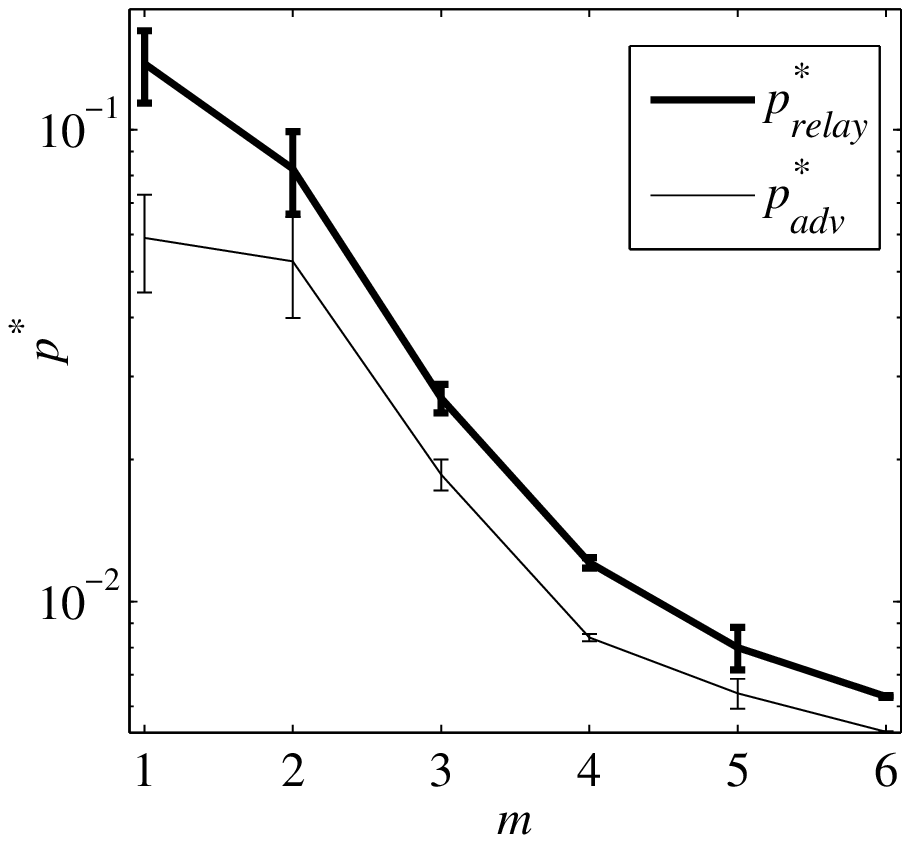}\label{fig:mlog}}
\end{center}\vspace*{-.3cm}\caption{The average value of $p^*$ with well-behaving relay (denoted $p^*_{relay}$) and adversarial relay (denoted $p^*_{adv}$) over 1000 random iterations of algebraic watchdog. We vary the value of $m$, the number of nodes using $v_{m+1}$ as a relay. The error bars represent the variance, $var_{relay}$ and $var_{adv}$. We set $m = 3$, $n = 10$, and $p_s = p_{m+1, 1}=p_{adv} = 10\%$.}\label{fig:matlab_m}\vspace*{-.3cm}
\end{figure} 

\section{Conclusions and Future Work}\label{sec:watchdog_conclusions}
We proposed the \emph{algebraic watchdog}, in which nodes can verify their neighbors probabilistically and police them locally by the means of overheard messages in a coded network. Using the algebraic watchdog scheme, nodes can compute a probability of consistency, $p^*$, which can be used to detect malicious behavior. Once a node has been identified as malicious, these nodes can either be punished, eliminated, or excluded from the network by using reputation based schemes such as \cite{marti}\cite{reputation}.

We first presented a graphical model and an analysis of the algebraic watchdog for two-hop networks. We then extended the algebraic watchdog to multi-hop, multi-source networks. We provided a trellis-like graphical model for the detection inference process, and an algorithm that may be used to compute the probability that a downstream node is consistent with the overheard information. We analytically showed how the size of the hash function, minimum distance of the error-correcting code used, as well as the quality of the overhearing channel can affect the probability of detection. Finally, we presented simulation results that support our analysis and intuition.
%
%

Our ultimate goal is to design a network in which the participants check their neighborhood locally to enable a secure global network –- \ie a self-checking network. Possible future work includes developing inference methods and approximation algorithms to decide efficiently aggregate local trust information to a global trust state.

\bibliographystyle{IEEEtran}
\bibliography{main_watchdog}

\begin{thebibliography}{10}
\providecommand{\url}[1]{#1}
\csname url@samestyle\endcsname
\providecommand{\newblock}{\relax}
\providecommand{\bibinfo}[2]{#2}
\providecommand{\BIBentrySTDinterwordspacing}{\spaceskip=0pt\relax}
\providecommand{\BIBentryALTinterwordstretchfactor}{4}
\providecommand{\BIBentryALTinterwordspacing}{\spaceskip=\fontdimen2\font plus
\BIBentryALTinterwordstretchfactor\fontdimen3\font minus
  \fontdimen4\font\relax}
\providecommand{\BIBforeignlanguage}[2]{{%
\expandafter\ifx\csname l@#1\endcsname\relax
\typeout{** WARNING: IEEEtran.bst: No hyphenation pattern has been}%
\typeout{** loaded for the language `#1'. Using the pattern for}%
\typeout{** the default language instead.}%
\else
\language=\csname l@#1\endcsname
\fi
#2}}
\providecommand{\BIBdecl}{\relax}
\BIBdecl

\bibitem{hubaux}
J.-P. Hubaux, L.~Butty\'{a}n, and S.~Capkun, ``The quest for security in mobile
  ad hoc networks,'' in \emph{Proceedings of the 2nd ACM MobiHoc}.\hskip 1em
  plus 0.5em minus 0.4em\relax ACM, 2001, pp. 146--155.

\bibitem{security01}
L.~Zhou and Z.~Haas, ``Securing ad hoc networks,'' \emph{Network, IEEE},
  vol.~13, no.~6, pp. 24 --30, 1999.

\bibitem{security02}
J.~Douceur, ``The sybil attack,'' in \emph{Peer-to-Peer Systems}, ser. Lecture
  Notes in Computer Science.\hskip 1em plus 0.5em minus 0.4em\relax Springer
  Berlin / Heidelberg, 2002, vol. 2429, pp. 251--260.

\bibitem{security03}
C.~Karlof and D.~Wagner, ``Secure routing in wireless sensor networks: attacks
  and countermeasures,'' \emph{Ad Hoc Networks}, vol.~1, no. 2-3, pp. 293 --
  315, 2003.

\bibitem{perlman}
R.~Perlman, ``Network layer protocols with {B}yzantine robustness,'' Ph.D.
  dissertation, Massachusetts Institute of Technology, Cambridge, MA, October
  1988.

\bibitem{liskov}
M.~Castro and B.~Liskov, ``Practical {B}yzantine fault tolerance,'' in
  \emph{Symposium on Operating Systems Design and Implementation (OSDI)},
  February 1999.

\bibitem{Lamport}
L.~Lamport, R.~Shostak, and M.~Pease, ``The {B}yzantine generals problem,''
  \emph{ACM Transactions on Programming Languages and Systems}, vol.~4, pp.
  382--401, 1982.

\bibitem{papadimitratos}
P.~Papadimitratos and Z.~J. Haas, ``Secure routing for mobile ad hoc
  networks,'' in \emph{Prceedings of the SCS Communication Networks and
  Disbributed Systems Modeling and Simulation Conference}, 2002.

\bibitem{marti}
S.~Marti, T.~J. Giuli, K.~Lai, and M.~Baker, ``Mitigating routing misbehavior
  in mobile ad hoc networks,'' in \emph{Proceedings of the 6th annual
  international conference on Mobile computing and networking}.\hskip 1em plus
  0.5em minus 0.4em\relax ACM, 2000, pp. 255--265.

\bibitem{algebraicwatchdog}
M.~Kim, M.~M\'{e}dard, J.~Barros, and R.~K\"{o}tter, ``An algebraic watchdog
  for wireless network coding,'' in \emph{Proceedings of IEEE ISIT}, June 2009.

\bibitem{ahlswede}
R.~Ahlswede, N.~Cai, S.-Y.~R. Li, and R.~W. Yeung, ``Network information
  flow,'' \emph{IEEE Transactions on Information Theory}, vol.~46, pp.
  1204--1216, 2000.

\bibitem{algebraic}
R.~Koetter and M.~M\'{e}dard, ``An algebraic approach to network coding,''
  \emph{IEEE/ACM Transaction on Networking}, vol.~11, pp. 782--795, 2003.

\bibitem{whenmeets}
G.~Liang, R.~Agarwal, and N.~Vaidya, ``When watchdog meets coding,'' in
  \emph{Proceedings of IEEE INFOCOM}, March 2010.

\bibitem{reliable}
D.~Lun, M.~M\'{e}dard, R.~Koetter, and M.~Effros, ``On coding for reliable
  communication over packet networks,'' \emph{Physical Communication}, vol.~1,
  no.~1, pp. 3--20, March 2008.

\bibitem{errorcorrection}
R.~W. Yeung and N.~Cai, ``Network error correction,'' \emph{Communications in
  Information and Systems}, no.~1, pp. 19--54, 2006.

\bibitem{resilient}
S.~Jaggi, M.~Langberg, S.~Katti, T.~Ho, D.~Katabi, and M.~M\'{e}dard,
  ``Resilient network coding in the presence of {B}yzantine adversaries,'' in
  \emph{Proceedings of IEEE INFOCOM}, March 2007, pp. 616 -- 624.

\bibitem{subspace}
R.~Koetter and F.~R. Kschischang, ``Coding for errors and erasures in random
  network coding,'' \emph{IEEE Transactions on Information Theory}, vol.~54,
  no.~8, pp. 3579--3591.

\bibitem{milcom}
M.~Kim, M.~M\'{e}dard, and J.~Barros, ``Countering {Byzantine} adversaries with
  network coding: An overhead analysis,'' in \emph{Proceedings of MILCOM},
  2008.

\bibitem{dsr}
D.~B. Johnson, ``Routing in ad hoc networks of mobile hosts,'' in
  \emph{Proceedings of the Workshop on Mobile Computing Systems and
  Applications}, 1994, pp. 158--163.

\bibitem{hypothesis}
U.~M. Maurer, ``Authentication theory and hypothesis testing,'' \emph{IEEE
  Transaction on Information Theory}, vol.~46, pp. 1350--1356, 2000.

\bibitem{hash}
M.~Dietzfelbinger, J.~Gil, Y.~Matias, and N.~Pippenger, ``Polynomial hash
  functions are reliable,'' in \emph{Proceedings of the 19th International
  Colloquium on Automata, Languages and Programming}, vol. 623.\hskip 1em plus
  0.5em minus 0.4em\relax Springer-Verlag, 1992, pp. 235--246.

\bibitem{listdecoding}
P.~Elias, ``List decoding for noisy channels,'' \emph{Technical Report 335,
  Research Laboratory of Electronics, MIT}, 1957.

\bibitem{reputation}
S.~Ganeriwal, L.~K. Balzano, and M.~B. Srivastava, ``Reputation-based framework
  for high integrity sensor networks,'' \emph{ACM Transactions on Sensor
  Networks}, vol.~4, no.~3, pp. 1--37, 2008.

\end{thebibliography}

\end{document}